\documentclass[11pt]{extarticle}
\usepackage{extsizes}
\usepackage{amssymb,amsfonts,amsthm}
\usepackage{amsmath}
\usepackage[makeroom]{cancel}
\usepackage{mathtools}
\usepackage{mathrsfs,pifont}
\usepackage{slashed,mathabx} 
\usepackage[bbgreekl]{mathbbol}
\usepackage{ytableau} 
\usepackage{tikz}
\usetikzlibrary{calc}

\usepackage[top=1in, bottom=1.25in, left=0.75in,right=0.75in]{geometry}
\usepackage{authblk}

\usepackage[font={small,sl}]{caption}
\usepackage[all]{xy}

\usepackage{hyperref}

\newtheorem{theorem}{Theorem}[section]
\newtheorem{lemma}[theorem]{Lemma}
\newtheorem{corollary}[theorem]{Corollary}
\theoremstyle{definition}
\newtheorem{definition}[theorem]{Definition}

\newtheorem{proposition}[theorem]{Proposition}
\newtheorem{example}[theorem]{Example}

\theoremstyle{remark}

\numberwithin{equation}{section}

\makeatletter
\def\blfootnote{\xdef\@thefnmark{}\@footnotetext}
\makeatother

\newcommand{\Aq}{\mathcal A_q}
\newcommand{\Oq}{O_q}
\newcommand{\cB}{\mathcal B}
\newcommand{\cW}{\mathcal W}
\newcommand{\cG}{\mathcal G}
\newcommand{\Gt}{\widetilde{\cG}}

\newcommand{\F}{\mathbb F}
\newcommand{\N}{\mathbb N}

\DeclareMathOperator{\Span}{Span}

\begin{document}

\title{A new PBW basis for the alternating central extension of \\the $q$-Onsager algebra}

\author{Haoran Zhu
\thanks{Division of Mathematical Sciences, Nanyang Technological University,
21 Nanyang Link, Singapore 637371. Email:
\texttt{zhuh0031@e.ntu.edu.sg}}}

\date{}
\maketitle

\begin{abstract}
We establish a new PBW basis for $\Aq$, the alternating central extension of the
$q$-Onsager algebra. Terwilliger showed that the alternating generators form
a PBW basis in the block order
$\cG<\cW^-<\cW^+<\widetilde{\cG}$. We prove that they also form a PBW basis
in the different block order
$\cW^-<\cG<\widetilde{\cG}<\cW^+$. Consequently, multiplication induces a
vector-space isomorphism
\[
\cW^-\otimes\cG\otimes\widetilde{\cG}\otimes\cW^+
\longrightarrow \Aq,
\]
thereby confirming a conjecture of Terwilliger.
\end{abstract}

{\noindent\itshape Keywords:} $q$-Onsager algebra, alternating central extension, alternating generators, PBW basis, tensor factorisation.

\medskip

{\noindent\itshape Mathematics Subject Classification:} 17B37, 17B67, 81R50.

\section{Introduction}

The Onsager algebra is one of the classical algebraic structures behind
exactly solvable lattice models.
It entered mathematical physics through Onsager's solution of the
two-dimensional Ising model \cite{Onsager1944}, and the Dolan--Grady
relations \cite{DolanGrady1982} later gave a concise algebraic mechanism for
producing the infinite families of commuting operators that underlie
Onsager-type integrability.
The \textbf{$q$-Onsager algebra} $\Oq$ is a $q$-deformation of this
structure.
It appears naturally in boundary quantum integrable systems
\cite{Baseilhac2005Integrable,Baseilhac2005Deformed,
BaseilhacKoizumi2005New,BaseilhacKoizumi2005Symmetry,
BaseilhacBelliard2010Generalized,
LemartheBaseilhacGainutdinov2026Fused,
BaseilhacGainutdinovLemarthe2025TT},
in the theory of tridiagonal pairs and tridiagonal algebras, in algebraic
combinatorics and representation theory
\cite{Terwilliger1993Subconstituent,Terwilliger2001,
ItoTanabeTerwilliger2001,ItoTerwilliger2009,ItoTerwilliger2010,
ItoNomuraTerwilliger2011,Terwilliger2018UAW,
Terwilliger2024S3},
and as a basic example of a quantum symmetric pair coideal subalgebra
\cite{Letzter2002,Kolb2014,Terwilliger2018Lusztig,
BaseilhacKolb2020,LuWang2021Drinfeld,LuRuanWang2023Hall}.

The present paper concerns the algebra $\Aq$ associated with the
$q$-Onsager algebra.
The algebraic structure underlying $\Aq$ appeared in the work of Baseilhac
and Koizumi \cite{BaseilhacKoizumi2005New}.
Baseilhac and Shigechi subsequently introduced the current algebra $\Aq$ in
the setting of Sklyanin's reflection algebra and gave its presentation in
terms of four infinite families of generators
\cite{Sklyanin1988,BaseilhacShigechi2010}.
On the basis of supporting evidence, $\Aq$ was conjectured to be isomorphic
to a central extension of $\Oq$
\cite{BaseilhacBelliard2017Attractive,Terwilliger2021Conjecture}.
Terwilliger proved this conjecture and identified $\Aq$ with the
\textbf{alternating central extension} of the $q$-Onsager algebra
\cite{Terwilliger2021ACE}; see also
\cite{Terwilliger2021OqACE,Terwilliger2023Compact}.
We use this terminology throughout the paper.
This alternating central extension is different from the central extension
of $\Oq$ appearing in the Drinfeld-type presentation of affine
$\imath$-quantum groups \cite{LuWang2021Drinfeld}.

The four families in the presentation of $\Aq$ are known as the
\textbf{alternating generators}; following Terwilliger's notation, we write
them as
\[
        \{\cW_{-k}\}_{k\in\N},\qquad
        \{\cW_{k+1}\}_{k\in\N},\qquad
        \{\cG_{k+1}\}_{k\in\N},\qquad
        \{\Gt_{k+1}\}_{k\in\N}.
\]
We reserve the corresponding straight symbols
\[
        W_{-k},\qquad W_{k+1},\qquad
        G_{k+1},\qquad \widetilde G_{k+1}
\]
for their images in $\Oq$ under a central reduction.
The presentation has many defining relations, but its four-family form is
highly structured.
This structure suggests that $\Aq$ should admit PBW bases and tensor
factorisations analogous to triangular decompositions in Lie-theoretic
settings.

This expectation is supported by several parallel developments around the
$q$-Onsager algebra.
Baseilhac and Kolb \cite{BaseilhacKolb2020} constructed root vectors and a
PBW basis for the $q$-Onsager algebra itself using Lusztig-type
automorphisms.
Terwilliger studied the $q$-Onsager algebra, its alternating central
extension, the alternating generators, and related structures in a series
of papers
\cite{Terwilliger2017PositivePart,Terwilliger2018UAW,
Terwilliger2018Action,Terwilliger2021Conjecture,
Terwilliger2021ACE,Terwilliger2021OqACE,
Terwilliger2023Compact,Terwilliger2024S3}.
In another direction, Drinfeld-type and Hall-algebra approaches place the
$q$-Onsager algebra within the broader theory of affine
$\imath$-quantum groups
\cite{LuWang2021Drinfeld,LuRuanWang2023Hall}.
More recently, the alternating central extension has appeared in universal
$K$-operator and transfer-matrix constructions
\cite{LemartheBaseilhacGainutdinov2026Fused,
BaseilhacGainutdinovLemarthe2025TT}.

In \cite[Conjecture~4.5]{Terwilliger2018Action}, Terwilliger formulated a
conjectural PBW basis for $\Aq$.
The conjecture consists of two assertions.
First, each of the four alternating generator families should freely
generate a polynomial subalgebra.
Second, the multiplication map from the tensor product of these four
polynomial subalgebras to $\Aq$ should be a vector-space isomorphism.
More explicitly, set
\begin{align*}
\cW^-&=\langle \cW_0,\cW_{-1},\cW_{-2},\ldots\rangle,
&
\cG&=\langle \cG_1,\cG_2,\cG_3,\ldots\rangle,\\
\cW^+&=\langle \cW_1,\cW_2,\cW_3,\ldots\rangle,
&
\widetilde{\cG}
&=\langle \Gt_1,\Gt_2,\Gt_3,\ldots\rangle,
\end{align*}
where $\langle\cdot\rangle$ denotes the subalgebra generated by the
indicated elements.
The conjectural tensor factorisation is
\[
        \cW^-\otimes\cG\otimes
        \widetilde{\cG}\otimes\cW^+
        \longrightarrow\Aq,
        \qquad
        u\otimes v\otimes w\otimes x\longmapsto uvwx.
\]
Thus, the block order in the conjecture is
\[
        \cW^-<\cG<\widetilde{\cG}<\cW^+.
\]
The corresponding ordered monomials are expected to form the desired PBW
basis of $\Aq$.

A major step towards this conjecture was taken by Terwilliger in
\cite[Theorem~6.1]{Terwilliger2021ACE}.
There, he proved that the alternating generators give a PBW basis for
$\Aq$ in any linear order satisfying
\[
        \cG_{i+1}<\cW_{-j}<\cW_{k+1}<\Gt_{\ell+1}
        \qquad (i,j,k,\ell\in\N).
\]
The corresponding block order is
\[
        \cG<\cW^-<\cW^+<\widetilde{\cG}.
\]
Using this PBW basis, Terwilliger also proved the tensor product
factorisation
\[
        \Aq\simeq\Oq\otimes\F[z_1,z_2,\ldots],
\]
where $\F$ is the ground field and the $z_i$ are mutually commuting indeterminates \cite[Theorems~10.2--10.4]{Terwilliger2021ACE}.
In particular, this PBW theorem immediately implies that each of the four
alternating generator families freely generates a polynomial subalgebra.
We record this consequence in Section~3 for completeness.
The remaining problem is to prove that the alternating generators also give
a PBW basis in the different block order
\[
        \cW^-<\cG<\widetilde{\cG}<\cW^+.
\]

The conjectural block order has recently appeared in computations of
universal TT- and TQ-relations for the alternating central extension of the
$q$-Onsager algebra
\cite{BaseilhacGainutdinovLemarthe2025TT}.
In that work, this order is used as a PBW order for calculations involving
the alternating generators.
Terwilliger's PBW theorem establishes a PBW basis in the different block
order
\[
        \cG<\cW^-<\cW^+<\widetilde{\cG},
\]
and therefore does not by itself justify the conjectural order. A separate
proof is required.

The choice of PBW order can also be important in related constructions.
For quantum affine algebras, factorised formulae for universal
$R$-matrices are associated with specific PBW orders; see, for example,
\cite[Theorem~7.1]{KhoroshkinTolstoy1991} and
\cite{Damiani1998}.
Likewise, the associated $L$-operators can admit LDU decompositions adapted
to a particular order; see, for example, the entries displayed in
\cite[eqs.~(4.8) and~(4.9)]{DingFrenkel1993}.
By analogy, in the setting of coideal or comodule algebras, the new PBW
order may be useful in the study of universal $K$-matrices and related
$K$-operators
\cite{AppelVlaar2022,LemartheBaseilhacGainutdinov2026Fused,
BaseilhacGainutdinovLemarthe2025TT}.
Moreover, after a central reduction from $\Aq$ to $\Oq$, it would be
interesting to investigate whether the crossing analysis developed here can
be adapted to the alternating-generator PBW conjectures in
\cite[Conjecture~1]{BaseilhacBelliard2017Attractive} and
\cite[Conjecture~16.2]{Terwilliger2021OqACE}.

The following theorem records the full statement of Terwilliger's conjecture \cite[Conjecture~4.5]{Terwilliger2018Action}.
Parts~{\rm (i)--(iv)} follow directly from Terwilliger's PBW theorem \cite[Theorem~6.1]{Terwilliger2021ACE} and are included for completeness.
The principal new result is the PBW basis in the conjectural block order; together with parts~{\rm (i)--(iv)}, it yields part~{\rm (v)}.

\begin{theorem}\label{thm:intro-main}
Assume that $0\neq q\in\F$ is not a root of unity. Then:
\begin{enumerate}
\item[{\rm (i)}] $\cW^-$ is a polynomial algebra freely generated by
        $\cW_0,\cW_{-1},\cW_{-2},\ldots$.
\item[{\rm (ii)}] $\cG$ is a polynomial algebra freely generated by
        $\cG_1,\cG_2,\cG_3,\ldots$.
\item[{\rm (iii)}] $\widetilde{\cG}$ is a polynomial algebra freely
        generated by $\Gt_1,\Gt_2,\Gt_3,\ldots$.
\item[{\rm (iv)}] $\cW^+$ is a polynomial algebra freely generated by
        $\cW_1,\cW_2,\cW_3,\ldots$.
\item[{\rm (v)}] The multiplication map
\[
        \cW^-\otimes\cG\otimes
        \widetilde{\cG}\otimes\cW^+
        \longrightarrow\Aq,\qquad
        u\otimes v\otimes w\otimes x\longmapsto uvwx
\]
is an isomorphism of vector spaces.
\end{enumerate}
\end{theorem}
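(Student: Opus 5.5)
Parts (i)--(iv) will follow at once from Terwilliger's PBW theorem \cite[Theorem~6.1]{Terwilliger2021ACE}: in any linear order with $\cG<\cW^-<\cW^+<\Gt$, the ordered monomials in the alternating generators form a basis of $\Aq$. In particular, the ordered monomials built from a single family are linearly independent. Within one family the generators mutually commute, by the defining relations of $\Aq$. So each of $\cW^-,\cG,\Gt,\cW^+$ is spanned by its ordered monomials, and these are independent. This gives (i)--(iv).

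For (v), it suffices to show that the ordered monomials in the block order $\cW^-<\cG<\Gt<\cW^+$ form a basis of $\Aq$. The plan has two halves, spanning and independence, organised around a filtration. I would give $\Aq$ a grading or filtration by total degree, assigning degree $k+1$ to $\cW_{-k},\cW_{k+1},\cG_{k+1},\Gt_{k+1}$. The defining relations are homogeneous up to lower-order terms in this degree, as in Terwilliger's $\N$-filtration.

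For spanning, I will take an arbitrary ordered monomial in Terwilliger's order and move its factors into the new order by induction on degree and on the number of inversions relative to the new order. The required swaps are of three kinds: moving $\cW^-$ to the left past $\cG$, moving $\cW^+$ to the right past $\Gt$, and moving $\cW^+$ past... no further swaps are needed, since $\cW^-<\cW^+$ already holds in both orders. Concretely, I need the commutator relations of $[\cW_{-j},\cG_{i+1}]$ and $[\Gt_{\ell+1},\cW_{k+1}]$ from the presentation of $\Aq$. These express such products as a $q$-commuted product plus terms in $\cW$'s of the other sign, with coefficient involving $(q^2-q^{-2})^{-1}$-type factors. The aim is to show that every correction term is a combination of products of strictly lower filtration degree, or of the same degree with fewer inversions. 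Induction then gives spanning.

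Independence is the main obstacle, because the commutator corrections need not lower degree. For example, $[\cW_0,\cG_{k+1}]$ produces terms like $\cW_{k+1}-\cW_{-k}$-type combinations of the same degree, which mix families. I would first pass to the associated graded algebra and show that the leading terms of the new ordered monomials are related to Terwilliger's basis by a unitriangular transition with respect to a suitable total order on multisets of generators (a ``crossing'' count). If leading terms do not suffice, the fallback is a dimension comparison. Terwilliger's basis gives the graded dimension of each filtered piece of $\Aq$ via the generating function $\prod_{n\ge1}(1-t^n)^{-4}$. The new ordered monomials of degree at most $N$ have the same count and span the $N$-th filtered piece, which is finite dimensional, so they are a basis. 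This second route is cleaner: it reduces (v) entirely to spanning within each filtered piece. The crux is then verifying that each reordering correction stays within filtration degree $\le N$ and strictly decreases a well-founded inversion statistic, which I expect requires a careful case analysis of the relations between $\cW^\mp$ and $\cG,\Gt$.
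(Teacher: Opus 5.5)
Your proof of (i)--(iv) is the same as the paper's. Part (v), however, has two genuine gaps.

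\textbf{The filtration is wrong.} With degree $k+1$ on all four families, the span $F_N$ of words of degree $\le N$ is not spanned by the elements of $\cB_{\rm T}$ of degree $\le N$. So $\dim F_N$ cannot be read off from $\prod_n(1-t^n)^{-4}$. Each defining relation is indeed homogeneous modulo lower terms for your grading, but their consequences are not. Using \eqref{eq:Aq1}--\eqref{eq:Aq3} with $k=0,1$ one finds
\[
[\cG_1,\Gt_1]=[2]_q\,[\cG_1,[\cW_0,\cW_1]]
=[2]_q(q-q^{-1})\rho\,\bigl(\cW_0\cW_2-\cW_{-1}\cW_1+\cW_1^2-\cW_0^2\bigr).
\]
The left side lies in your $F_2$, as do $\cW_0^2$ and $\cW_1^2$. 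However, $\cW_0\cW_2$ and $\cW_{-1}\cW_1$ are distinct elements of $\cB_{\rm T}$ of your degree $3$. Hence $\dim F_2\ge 20$, whereas there are only $19$ ordered monomials of degree $\le 2$ in either block order. The new monomials therefore cannot span $F_2$, and your dimension comparison collapses.

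You need Terwilliger's degree \eqref{eq:degree}, namely $2k+1$ on $\cW_{-k},\cW_{k+1}$ and $2k+2$ on $\cG_{k+1},\Gt_{k+1}$. Reductions do not raise this degree, and its Hilbert series is $\prod_n(1-x^n)^{-2}$ as in \eqref{eq:hilbert}. With that grading your fallback is exactly the paper's independence argument. Independence then becomes the easy half, not the main obstacle.

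\textbf{The spanning step is missing its key ingredient.} The presentation gives a single $q$-commutation only for $\cW_0\cG_{k+1}$ and $\Gt_{k+1}\cW_1$. For other indices, \eqref{eq:Aq6} and \eqref{eq:Aq9} only say that $[\cW_{-i},\cG_{j+1}]$ and $[\cW_{i+1},\Gt_{j+1}]$ are symmetric in $i,j$.

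The usable rule is the derived one in Lemma~\ref{lem:leading-reductions}. Modulo lower bidegree, $\cW_{-i}\cG_{j+1}$ equals $\cG_{j+1}\cW_{-i}$ plus a combination of other products $\cG_{b+1}\cW_{-(N-b)}$ with $N=i+j$. These all have the same degree and length, and each has exactly one $\cG$-before-$\cW^-$ inversion. Solving for one $\cG\cW^-$ product therefore produces other $\cG\cW^-$ products of the same weight, so an induction on degree and inversion number does not go through.

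The missing idea is to treat all $N+1$ crossings with a given $N$ at once. They are related by the matrix $M_N(1-q^{-2})$, or $M_N(1-q^2)$ for the $\cW^+\Gt$ crossing, and one must show this matrix is invertible. The paper proves $\det M_N(\eta)=(1-\eta)^{\lfloor N/2\rfloor+1}$, which equals $q^{\mp 2(\lfloor N/2\rfloor+1)}\neq 0$.

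In addition, the remainders in these rules include terms of the same degree but smaller length, for example $\cW_{-(N+1)}$ in $\cW_0\cG_{N+1}$. The paper handles these by inducting on the bidegree (degree, length) and expanding every remainder in $\cB_{\rm T}$, rather than by inducting on degree and inversions.
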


The proof of part~{\rm (v)} proceeds by showing that the alternating
generators give a PBW basis in the conjectural block order
\[
        \cW^-<\cG<\widetilde{\cG}<\cW^+.
\]

The paper is organised as follows.
In Section~2 we recall the alternating central extension $\Aq$, the
alternating generators, the degree function, and Terwilliger's PBW theorem.
In Section~3 we record, for completeness, that the four single-family
subalgebras are polynomial algebras.
In Section~4 we establish the two finite crossing computations and the
determinant calculation.
In Section~5 we use these computations to change the PBW order and prove the
conjectural PBW basis.
In Section~6 we deduce the tensor factorisation and complete the proof of
Theorem~\ref{thm:intro-main}.

\section{The alternating central extension of the \texorpdfstring{$q$}{q}-Onsager algebra}

We recall the current-algebra presentation of the alternating central
extension $\Aq$ and fix the notation used throughout the paper. 
Let $\F$ denote a field, and let $\N=\{0,1,2,\ldots\}$.
All vector spaces and tensor products are over $\F$, and all algebras are associative unital $\F$-algebras.
Throughout, $q$ denotes a non-zero element of $\F$ which is not a root of unity. 
For elements $X,Y$ of an algebra, we write
\[
        [X,Y]=XY-YX,
        \qquad
        [X,Y]_q=qXY-q^{-1}YX.
\]
We also set
\[
        [2]_q=q+q^{-1}.
\]

The following presentation is due to Baseilhac and Shigechi~\cite[Definition~3.1]{BaseilhacShigechi2010}.  
We follow the notation of Terwilliger \cite[Definition~2.4]{Terwilliger2021ACE}.

\begin{definition}\label{def:Aq}
The alternating central extension $\Aq$ of the $q$-Onsager algebra is the
$\F$-algebra generated by the four families
\[
        \cW_{-k},\qquad \cW_{k+1},\qquad
        \cG_{k+1},\qquad \Gt_{k+1}
        \qquad (k\in\N).
\]
These generators are called the \textbf{alternating generators} of $\Aq$.
The defining relations are, for all $k,\ell\in\N$:
\begin{align}
&[\cW_0,\cW_{k+1}]
        =[\cW_{-k},\cW_1]
          =\frac{\Gt_{k+1}-\cG_{k+1}}{q+q^{-1}},
          \label{eq:Aq1}\\
&[\cW_0,\cG_{k+1}]_q
        =[\Gt_{k+1},\cW_0]_q
          =\rho \cW_{-k-1}-\rho \cW_{k+1},
          \label{eq:Aq2}\\
&[\cG_{k+1},\cW_1]_q
        =[\cW_1,\Gt_{k+1}]_q
          =\rho \cW_{k+2}-\rho \cW_{-k},
          \label{eq:Aq3}\\
&[\cW_{-k},\cW_{-\ell}]=0,
        \qquad
        [\cW_{k+1},\cW_{\ell+1}]=0,
        \label{eq:Aq4}\\
&[\cW_{-k},\cW_{\ell+1}]
        +[\cW_{k+1},\cW_{-\ell}]=0,
        \label{eq:Aq5}\\
&[\cW_{-k},\cG_{\ell+1}]
        +[\cG_{k+1},\cW_{-\ell}]=0,
        \label{eq:Aq6}\\
&[\cW_{-k},\Gt_{\ell+1}]
        +[\Gt_{k+1},\cW_{-\ell}]=0,
        \label{eq:Aq7}\\
&[\cW_{k+1},\cG_{\ell+1}]
        +[\cG_{k+1},\cW_{\ell+1}]=0,
        \label{eq:Aq8}\\
&[\cW_{k+1},\Gt_{\ell+1}]
        +[\Gt_{k+1},\cW_{\ell+1}]=0,
        \label{eq:Aq9}\\
&[\cG_{k+1},\cG_{\ell+1}]=0,
        \qquad
        [\Gt_{k+1},\Gt_{\ell+1}]=0,
        \label{eq:Aq10}\\
&[\Gt_{k+1},\cG_{\ell+1}]
        +[\cG_{k+1},\Gt_{\ell+1}]=0,
        \label{eq:Aq11}
\end{align}
where $\rho=-(q^2-q^{-2})^2$.
\end{definition}

Following \cite[Definition~2.4, Eq.~(14)]{Terwilliger2021ACE}, we use the convention
\begin{equation}\label{eq:G0}
        \cG_0=\Gt_0=-(q-q^{-1})[2]_q^2.
\end{equation}
Thus $\cG_0$ and $\Gt_0$ are scalars, not alternating generators.
This convention will be used in the reduction rules, where terms of index zero may occur.

We next fix the PBW convention used in the sequel.

\begin{definition}
\label{def:PBW}
Let $A$ be an $\F$-algebra, let $\Omega\subseteq A$, and let $<$ be a linear
order on $\Omega$.  We say that $\Omega$, with the order $<$, gives a
\textbf{PBW basis} for $A$ if the ordered monomials
\[
        a_1a_2\cdots a_n,
        \qquad
        n\geq 0,\quad
        a_1,\ldots,a_n\in\Omega,\quad
        a_1\leq a_2\leq\cdots\leq a_n,
\]
form a basis of $A$ over $\F$.  For $n=0$, the monomial is interpreted as the
identity element of $A$.
\end{definition}

The following theorem of Terwilliger is the starting point for our argument.

\begin{theorem}[{\cite[Theorem~6.1]{Terwilliger2021ACE}}]
\label{thm:terwilliger-pbw}
Assume that $0\neq q\in\F$ is not a root of unity.  Then the alternating
generators of $\Aq$, equipped with any linear order $<$ satisfying
\begin{equation}
\label{eq:terw-order}
        \cG_{i+1}<\cW_{-j}<\cW_{k+1}<\Gt_{\ell+1}
        \qquad (i,j,k,\ell\in\N),
\end{equation}
give a PBW basis for $\Aq$.
\end{theorem}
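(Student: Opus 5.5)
This is Terwilliger's theorem, so the plan below follows the general shape of the argument in \cite{Terwilliger2021ACE}. It has two halves: \emph{spanning}, by a reduction argument with respect to a grading, and \emph{linear independence}, by comparing dimensions with a model algebra whose PBW basis is already known.

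\textbf{Grading.} I would first equip $\Aq$ with the degree function
\[
\deg \cW_{-k}=\deg\cW_{k+1}=2k+1,\qquad \deg\cG_{k+1}=\deg\Gt_{k+1}=2k+2 .
\]
One checks from \eqref{eq:Aq1}--\eqref{eq:Aq11} that every defining relation is homogeneous, except for the scalar terms involving $\cG_0,\Gt_0$ from \eqref{eq:G0}, which are of lower degree. Hence $\Aq$ carries an ascending filtration $\Aq=\bigcup_n \Aq^{(n)}$. Its associated graded algebra is generated by the images of the alternating generators.

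\textbf{Spanning.} For each pair of generators $X>Y$ in an order satisfying \eqref{eq:terw-order}, I would derive a reduction rule. It writes $XY$ as $YX$ plus a linear combination of ordered monomials that are either of strictly lower filtration degree, or of the same degree but with fewer inversions.
\begin{enumerate}
\item[(a)] For pairs inside one family, \eqref{eq:Aq4} and \eqref{eq:Aq10} say the generators commute.
\item[(b)] For mixed pairs, I would solve \eqref{eq:Aq5}--\eqref{eq:Aq9} and \eqref{eq:Aq11} together with \eqref{eq:Aq1}--\eqref{eq:Aq3}. These give recursive formulas for $[\cW_{-k},\cG_{\ell+1}]$, $[\cW_{k+1},\Gt_{\ell+1}]$, $[\Gt_{k+1},\cG_{\ell+1}]$, and so on. The induction runs on $k+\ell$, starting from the case $k=0$ or $\ell=0$, where \eqref{eq:Aq1}--\eqref{eq:Aq3} apply directly.
\end{enumerate}
A double induction, on filtration degree and then on the number of inversions, then shows that the ordered monomials span $\Aq$.

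\textbf{Independence.} I would compare graded dimensions. I would construct an algebra homomorphism from $\Aq$ onto $\Oq\otimes\F[z_1,z_2,\ldots]$, or in the reverse direction. In such a map, the central elements $z_i$ of degree $2i$ arise from the generating-function combinations of $\cG(t)$ and $\Gt(t)$ that are central in $\Aq$, and the images of $W_0,W_1$ generate $\Oq$. The model algebra has a known PBW basis, by Baseilhac--Kolb \cite{BaseilhacKolb2020}. Its Hilbert series in each filtered degree should therefore equal the number of ordered monomials of that degree in the alternating generators, since both are products of four families with the same degree distribution. Given spanning, a surjection between filtered pieces of equal finite dimension forces the ordered monomials to be independent.

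\textbf{Main obstacle.} I expect the hard step to be independence: producing the comparison map and verifying that it respects the relations. Checking that a candidate map kills all of \eqref{eq:Aq1}--\eqref{eq:Aq11} is a heavy generating-function computation. If it fails, the fallback is Bergman's diamond lemma applied to the reduction system from the spanning step, resolving all overlap ambiguities $XYZ$ with $X>Y>Z$. I expect this to be a very long case analysis, but it could be organised by the generating functions.
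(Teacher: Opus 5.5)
The paper does not reprove this statement; it imports it from Terwilliger. His argument rests on the explicit reduction rules of his Proposition~5.1, which the paper uses to straighten words into $\cB_{\rm T}$. In that work the logical order is the reverse of what your independence step needs. The PBW theorem comes first, and both the Hilbert series \eqref{eq:hilbert} and the factorisation $\Aq\simeq\Oq\otimes\F[z_1,z_2,\ldots]$ (his Theorems~10.2--10.4) are derived from it. The introduction says exactly this, and recalls that the central-extension statement was a conjecture until then. Your spanning half is in the spirit of his reduction rules; the gap is in independence.

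A lower bound on the dimensions of the filtered pieces of $\Aq$ requires a homomorphism out of $\Aq$, say $\phi:\Aq\to\Oq\otimes\F[z_1,z_2,\ldots]$, surjective on filtered pieces. To define $\phi$ you must assign images to all of $\cW_{-k},\cW_{k+1},\cG_{k+1},\Gt_{k+1}$ and verify every relation \eqref{eq:Aq1}--\eqref{eq:Aq11}. Starting from $W_0,W_1$, relations \eqref{eq:Aq1}--\eqref{eq:Aq3} produce the other generators recursively only after the $\cG_{k+1}$ (equivalently the sums $\cG_{k+1}+\Gt_{k+1}$) have been chosen. These must be chosen so that the remaining equalities in \eqref{eq:Aq1}--\eqref{eq:Aq3} and all of \eqref{eq:Aq4}--\eqref{eq:Aq11} hold, and the proposal gives no candidate. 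Finding such elements amounts to constructing a central reduction $\Aq\to\Oq$. That is the conjecture Terwilliger settled using the very PBW basis you are trying to establish. So as written the step is circular, and doing it from scratch is a problem at least as hard as the theorem.

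The map in the other direction ($W_0\mapsto\cW_0$, $W_1\mapsto\cW_1$, $z_n\mapsto$ suitable central elements) is the easy one. But it only gives surjectivity, an upper bound you already have from spanning; given spanning, its injectivity is equivalent to the independence you want. That leaves your fallback, the diamond lemma. There the entire substance is not carried out: resolving the overlap ambiguities $XYZ$ for all index triples and all block patterns, lower-order terms included. So the proposal establishes spanning but not linear independence.

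There are two smaller corrections to the spanning half. First, the inhomogeneity is not in $\cG_0,\Gt_0$, which do not occur in \eqref{eq:Aq1}--\eqref{eq:Aq11}. It is in \eqref{eq:Aq2} and \eqref{eq:Aq3}, where $\rho\cW_{k+1}$ and $\rho\cW_{-k}$ have degree $2k+1$ against $2k+3$ on the left; the filtration conclusion survives. Second, induction on (degree, inversions) does not terminate as stated, since a reduction can output a shorter word of the same degree with more inversions. For example, $\cW_1\cW_{-k}=\cW_{-k}\cW_1-(\Gt_{k+1}-\cG_{k+1})/[2]_q$ turns $\cW_1\cW_{-k}\cW_2\cW_3$ (one inversion) into a sum containing $\Gt_{k+1}\cW_2\cW_3$ (two inversions). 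Inducting on (degree, length, inversions), as the paper does with bidegree in Proposition~\ref{prop:conjectural-spans}, repairs this.
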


For the rest of the paper, we fix one such order and denote it by
$<_{\rm T}$.
Within each of the four families, we choose the order by increasing index:
\[
        \cG_{i+1}<_{\rm T}\cG_{j+1},\qquad
        \cW_{-i}<_{\rm T}\cW_{-j},\qquad
        \cW_{i+1}<_{\rm T}\cW_{j+1},\qquad
        \Gt_{i+1}<_{\rm T}\Gt_{j+1}
        \qquad (i<j).
\]
Thus
\[
        \cG_1<_{\rm T}\cG_2<_{\rm T}\cdots
        <_{\rm T}\cW_0<_{\rm T}\cW_{-1}<_{\rm T}\cW_{-2}<_{\rm T}\cdots
        <_{\rm T}\cW_1<_{\rm T}\cW_2<_{\rm T}\cW_3<_{\rm T}\cdots
        <_{\rm T}\Gt_1<_{\rm T}\Gt_2<_{\rm T}\cdots .
\]
Let $\cB_{\rm T}$ denote the corresponding PBW basis. 
This is the known PBW basis from which we shall change to the conjectural order.

We shall also use the degree function introduced in
\cite[Definition~7.1]{Terwilliger2021ACE}.
On the alternating generators, it is given by
\begin{equation}
\label{eq:degree}
        \deg \cW_{-k}=\deg \cW_{k+1}=2k+1,
        \qquad
        \deg \cG_{k+1}=\deg \Gt_{k+1}=2k+2
        \qquad (k\in\N).
\end{equation}
For a word $w$ in the alternating generators, $\deg w$ is the sum of the
degrees of its letters. We write $\operatorname{len}(w)$ for the length of
$w$, namely the number of alternating generators appearing in it, and define
\[
        \operatorname{bideg}(w)
        =\bigl(\deg w,\operatorname{len}(w)\bigr).
\]
We order bidegrees lexicographically, comparing degree first. The scalars
$\cG_0$ and $\Gt_0$ from \eqref{eq:G0} are regarded as having degree zero
and length zero.

The fixed PBW basis $\cB_{\rm T}$ gives a convenient degree--length decomposition.  
For $d,r\in\N$, set
\[
        {\Aq}_{d,r}
        =
        \Span\{b\in\cB_{\rm T}\mid
        \deg b=d,\; \operatorname{len}(b)=r\}.
\]
Then
\[
        \Aq=\bigoplus_{d,r\in\N}{\Aq}_{d,r}
\]
as a vector space. 

For the filtration arguments below, define
\begin{equation}
\label{eq:lower-space}
        \Aq^{\lhd(d,r)}
        =
        \sum_{0\leq e<d}\sum_{s\geq 0}{\Aq}_{e,s}
        +
        \sum_{0\leq s<r}{\Aq}_{d,s}.
\end{equation}
Thus $\Aq^{\lhd(d,r)}$ is the span of the PBW basis elements whose bidegree is
lower than $(d,r)$ in lexicographic order, with degree compared first.  Since
every alternating generator has positive degree, only finitely many lengths
occur in any fixed degree.

Finally, set
\[
        B_d=\sum_{e=0}^{d}\sum_{r\geq 0}{\Aq}_{e,r}
        \qquad (d\in\N).
\]
This is the degree filtration piece determined by $\cB_{\rm T}$.  Terwilliger's
Hilbert-series computation \cite[Lemma~7.6]{Terwilliger2021ACE} gives
\begin{equation}
\label{eq:hilbert}
        \sum_{d\geq 0}
        \dim\left(\sum_{r\geq 0}{\Aq}_{d,r}\right)x^d
        =
        \prod_{n=1}^{\infty}\frac{1}{(1-x^n)^2}.
\end{equation}
Equivalently, the graded dimension is that of a polynomial algebra with two
generators in every positive degree.  This dimension count will be used in
Section~5, after the spanning argument, to prove linear independence.

\section{The four polynomial subalgebras}

For completeness, we record the polynomiality assertions in
\cite[Conjecture~4.5(i)--(iv)]{Terwilliger2018Action}.
They are immediate consequences of Terwilliger's PBW theorem; see
Theorem~\ref{thm:terwilliger-pbw} and
\cite[below Theorem~6.1]{Terwilliger2021ACE}.

For a subset $S\subseteq\Aq$, write $\langle S\rangle$ for the unital
subalgebra of $\Aq$ generated by $S$. Define
\[
        \cW^-=
        \langle \cW_0,\cW_{-1},\cW_{-2},\ldots\rangle,
        \qquad
        \cG=
        \langle \cG_1,\cG_2,\cG_3,\ldots\rangle,
\]
and
\[
        \widetilde{\cG}
        =
        \langle \Gt_1,\Gt_2,\Gt_3,\ldots\rangle,
        \qquad
        \cW^+
        =
        \langle \cW_1,\cW_2,\cW_3,\ldots\rangle.
\]

\begin{proposition}\label{prop:poly}
Let $\F[\lambda_1,\lambda_2,\ldots]$ denote the polynomial algebra in
countably many commuting indeterminates. Each of the following assignments
extends to an algebra isomorphism:
\[
\begin{aligned}
\F[\lambda_1,\lambda_2,\ldots]
    &\longrightarrow \cW^-,
&\qquad \lambda_{k+1}
    &\longmapsto \cW_{-k},\\
\F[\lambda_1,\lambda_2,\ldots]
    &\longrightarrow \cG,
&\qquad \lambda_{k+1}
    &\longmapsto \cG_{k+1},\\
\F[\lambda_1,\lambda_2,\ldots]
    &\longrightarrow \widetilde{\cG},
&\qquad \lambda_{k+1}
    &\longmapsto \Gt_{k+1},\\
\F[\lambda_1,\lambda_2,\ldots]
    &\longrightarrow \cW^+,
&\qquad \lambda_{k+1}
    &\longmapsto \cW_{k+1},
\end{aligned}
\qquad (k\in\N).
\]
Consequently, each of the four displayed subalgebras is a polynomial
algebra freely generated by its indicated family of alternating generators.
\end{proposition}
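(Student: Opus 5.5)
The plan is to derive all four isomorphisms from Theorem~\ref{thm:terwilliger-pbw}, using the freedom to choose any linear order compatible with \eqref{eq:terw-order}. The argument is the same for each family, so we describe it for a single family $X_1,X_2,\ldots$. Here $X_{k+1}$ stands for $\cW_{-k}$, $\cG_{k+1}$, $\Gt_{k+1}$ or $\cW_{k+1}$, respectively.

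\emph{Existence and surjectivity.} By \eqref{eq:Aq4} and \eqref{eq:Aq10}, the elements within each family pairwise commute. Hence the assignment $\lambda_{k+1}\mapsto X_{k+1}$ extends, by the universal property of the polynomial algebra, to an algebra homomorphism $\phi:\F[\lambda_1,\lambda_2,\ldots]\to\Aq$. The image of $\phi$ is a subalgebra containing all $X_{k+1}$, and it is contained in $\langle X_1,X_2,\ldots\rangle$, so the image is exactly that subalgebra. Thus $\phi$ maps onto the displayed subalgebra.

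\emph{Injectivity.} The monomials $\lambda_{k_1+1}\lambda_{k_2+1}\cdots\lambda_{k_n+1}$ with $k_1\le k_2\le\cdots\le k_n$ form a basis of $\F[\lambda_1,\lambda_2,\ldots]$. Under $\phi$, such a monomial goes to $X_{k_1+1}\cdots X_{k_n+1}$. This is an ordered monomial for the fixed order $<_{\rm T}$, because $<_{\rm T}$ orders each family by increasing index. It is a monomial in a single block, so no cross-family ordering issue arises. Since $\cB_{\rm T}$ is a basis of $\Aq$ by Theorem~\ref{thm:terwilliger-pbw}, distinct such monomials are distinct elements of $\cB_{\rm T}$ and are therefore linearly independent. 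Hence $\phi$ sends a basis to a linearly independent set, so it is injective and therefore an isomorphism onto its image.

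The only point needing any care is that $\phi$ really sends distinct basis monomials to distinct ordered PBW monomials. This holds because the generators are distinct elements of the ordered set $\Omega$ underlying Theorem~\ref{thm:terwilliger-pbw}, so different multisets of indices give different ordered words, hence different elements of $\cB_{\rm T}$. With that checked, the four isomorphisms follow, and so does the final assertion that each subalgebra is a polynomial algebra freely generated by its family.
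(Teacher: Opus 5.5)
Your proposal is correct and follows essentially the same route as the paper: commutativity within each family from \eqref{eq:Aq4} and \eqref{eq:Aq10} gives a surjective homomorphism from the polynomial algebra. Injectivity then holds because the standard monomials are sent to single-family ordered monomials, and these lie in $\cB_{\rm T}$ and are therefore linearly independent.
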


\begin{proof}
By \eqref{eq:Aq4} and \eqref{eq:Aq10}, the generators within each of the
four families commute pairwise. Hence each displayed assignment extends,
by the universal property of the polynomial algebra, to a surjective
algebra homomorphism.

The standard monomial basis of
$\F[\lambda_1,\lambda_2,\ldots]$ is mapped to the corresponding
single-family ordered monomials. With the increasing-index order fixed in
Section~2, these monomials are elements of the PBW basis
$\cB_{\rm T}$ from Theorem~\ref{thm:terwilliger-pbw}. They are therefore
linearly independent. Each of the four homomorphisms is consequently
injective, and hence is an algebra isomorphism.
\end{proof}

\section{The two crossing computations}

This section contains the explicit calculations needed in the proof.
We compare the known PBW block order $\cG<\cW^-<\cW^+<\Gt$ with the conjectural block order $\cW^-<\cG<\Gt<\cW^+$.
Thus, it is enough to control the two adjacent block crossings
\[
        \cG\cW^- \leftrightarrow \cW^-\cG,
        \qquad
        \cW^+\Gt \leftrightarrow \Gt\cW^+.
\]
Terwilliger's reduction rules
\cite[Section~5]{Terwilliger2021ACE} show that, modulo terms of lower
bidegree, each crossing is governed by a finite matrix.
We begin with this matrix.

Let $N\in\N$.  
For $0\leq a\leq N$, set
\[
        m_a=\min(a,N-a).
\]
For a parameter $\eta$, let $M_N(\eta)$ be the $(N+1)\times (N+1)$ matrix whose rows and columns are indexed by $0,1,\ldots,N$, and whose row indexed by $a$ is defined by
\begin{equation}\label{eq:MN-row}
        \sum_{b=0}^{N} (M_N(\eta))_{ab} Z_b
        =
        Z_a+\eta\sum_{b=0}^{m_a-1}Z_b
        -\eta\sum_{b=N-m_a}^{N}Z_b
\end{equation}
for indeterminates $Z_0,Z_1,\ldots,Z_N$. 
Empty sums are interpreted as zero.
In particular, when $m_a=0$, the final sum in \eqref{eq:MN-row} consists of the single term $Z_N$.

\begin{example}
Let $N=4$. Then $m_0=0$, $m_1=1$, $m_2=2$, $m_3=1$, and $m_4=0$.
With respect to the ordered basis $Z_0,Z_1,Z_2,Z_3,Z_4$, the matrix $M_4(\eta)$ is given by
\[
M_4(\eta)=
\begin{pmatrix}
1 & 0 & 0 & 0 & -\eta\\
\eta & 1 & 0 & -\eta & -\eta\\
\eta & \eta & 1-\eta & -\eta & -\eta\\
\eta & 0 & 0 & 1-\eta & -\eta\\
0 & 0 & 0 & 0 & 1-\eta
\end{pmatrix}.
\]
For instance, the row indexed by \(a=2\) is obtained from $Z_2+\eta(Z_0+Z_1)-\eta(Z_2+Z_3+Z_4)$, which explains the entry \(1-\eta\) in the third diagonal position.
\end{example}

\begin{lemma}\label{lem:detMN}
For $N\in\N$,
\[
        \det M_N(\eta)=(1-\eta)^{\lfloor N/2\rfloor+1}.
\]
\end{lemma}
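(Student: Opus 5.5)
The plan is to reduce $M_N(\eta)$ to a triangular matrix by unimodular row and column operations. First I would read the entries off \eqref{eq:MN-row}. Split the index set $\{0,\ldots,N\}$ into the lower part $L=\{a: a<N/2\}$, the upper part $H=\{a: a>N/2\}$, and, when $N$ is even, the middle index $N/2$.

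For $a\in L$ we have $m_a=a$, so row $a$ is
\[
\eta(Z_0+\cdots+Z_{a-1})+Z_a-\eta(Z_{N-a}+\cdots+Z_N),
\]
where the last block lies entirely in $H$. For $a\in H$ we have $m_a=N-a$. Writing $a=N-a'$ with $a'\in L$, row $a$ is
\[
\eta(Z_0+\cdots+Z_{a'-1})+Z_{N-a'}-\eta(Z_{N-a'}+\cdots+Z_N).
\]
For the middle index (and for $N=0$) the row is $\eta(Z_0+\cdots+Z_{N/2-1})+(1-\eta)Z_{N/2}-\eta(Z_{N/2+1}+\cdots+Z_N)$. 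The edge case $m_a=0$, i.e.\ $a=0$ or $a=N$, is consistent with these formulas.

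The key observation is that rows $a$ and $N-a$ (for $a\in L$) share the same $\eta$-parts. Their difference is therefore
\[
\text{row}_{N-a}-\text{row}_a \;=\; Z_{N-a}-Z_a .
\]
For each $a\in L$, I will replace row $N-a$ by this difference; this does not change the determinant. Next I add column $N-a$ to column $a$, for each $a\in L$. This turns the new row $N-a$ into the unit vector $e_{N-a}$. Expanding the determinant along these rows deletes the rows and columns indexed by $H$. What remains is $\det$ of the square submatrix on $L\cup\{N/2\}$, in which column $a\in L$ has been replaced by (old column $a$) $+$ (old column $N-a$).

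The remaining step is to check that this submatrix is triangular with every diagonal entry equal to $1-\eta$. For rows $a'\in L$ and columns $a\in L$:
\begin{itemize}
\item if $a'<a$, both $M_{a'a}$ and $M_{a',N-a}$ vanish, the latter because $N-a<N-a'$;
\item if $a'=a$, the entry is $1+(-\eta)=1-\eta$;
\item if $a'>a$, the entry is $\eta+(-\eta)=0$.
\end{itemize}
So the $L\times L$ block is diagonal, equal to $(1-\eta)I$. The middle column has zero entries in the $L$ rows, since $N/2<N-a'$ for $a'\in L$. The middle row has entries $\eta-\eta=0$ in the modified $L$ columns and diagonal entry $1-\eta$.

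Hence the determinant is $(1-\eta)^{k}$, where $k=|L|+[N\text{ even}]$. This equals $N/2+1$ for even $N$ and $(N+1)/2$ for odd $N$, i.e.\ $\lfloor N/2\rfloor+1$ in both cases. The only delicate point is the index bookkeeping in this last check: confirming that the $-\eta$ block of each row starts exactly at column $N-m_a$, so that the cancellations $\eta-\eta$ line up. I would verify it directly against the $N=4$ example.
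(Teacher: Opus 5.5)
Your proof is correct, and it takes a different route from the paper's.

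\textbf{The paper's route.} It argues by induction. The row indexed by $N$ is $(1-\eta)Z_N$, and once column $N$ is deleted the row indexed by $0$ is just $Z_0$. Expanding along these two rows and relabelling $a\mapsto a-1$, $N\mapsto N-2$ leaves exactly $M_{N-2}(\eta)$. Hence $\det M_N(\eta)=(1-\eta)\det M_{N-2}(\eta)$, with base cases $N=0,1$.

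\textbf{Your route.} You use the same symmetry $a\leftrightarrow N-a$, but for all pairs at once. Rows $a$ and $N-a$ have identical $\eta$-parts. So one round of row operations and one round of column operations, each adding an unmodified row or column, turns every row indexed by $H$ into a diagonal unit vector. The complementary principal block is then in fact diagonal, equal to $(1-\eta)I$. Your index checks all go through, including the edge cases $a\in\{0,N\}$ and $N\in\{0,1\}$. The expansion along the unit rows carries sign $+$ because the ones sit on the diagonal.

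\textbf{What each buys.} The paper's argument is shorter. It needs only the two extreme rows, plus the implicit check that $m_a$ drops by one under the relabelling. Your argument is non-recursive and shows directly that each pair $\{a,N-a\}$, and the middle index when $N$ is even, contributes one factor $1-\eta$. It also produces explicit unimodular $E,F$ with $E\,M_N(\eta)\,F$ block upper triangular, with diagonal blocks $(1-\eta)I$ and $I$. That makes the inverse used in Corollary~\ref{cor:invert-crossings} essentially explicit, whereas the paper only needs its existence.
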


\begin{proof}
For $N=0$ and $N=1$, one has $ M_0(\eta)=(1-\eta)$ and $M_1(\eta)= \left( \begin{smallmatrix} 1 & -\eta\\0 & 1-\eta \end{smallmatrix} \right)$, so the formula is immediate.

Assume $N\geq 2$.  Since $m_N=0$, the last row of $M_N(\eta)$ has only one non-zero entry, namely $1-\eta$ in the last column. 
Expanding along this row gives a factor $1-\eta$.  After deleting the last row and the last column, the first row of the remaining matrix has only one non-zero entry, namely $1$ in the first column.  
Expanding along this row and relabelling the remaining rows and columns gives $M_{N-2}(\eta)$. 
Hence
\[
\det M_N(\eta)=(1-\eta)\det M_{N-2}(\eta).
\]
The result follows by induction on $N$.
\end{proof}

\begin{lemma}[{\cite[Proposition~5.1(v), (vi)]{Terwilliger2021ACE}}]
\label{lem:leading-reductions}
Let $i,j\in\N$, and set $m=\min(i,j)$. Modulo
$\Aq^{\lhd(2i+2j+3,2)}$, one has
\begin{align}
\cW_{-i}\cG_{j+1}
&\equiv
\cG_{j+1}\cW_{-i}
+(1-q^{-2})\sum_{\ell=1}^{m}
        \cG_\ell\cW_{-(i+j+1-\ell)}
-(1-q^{-2})\sum_{\ell=0}^{m}
        \cG_{i+j+1-\ell}\cW_{-\ell},
\label{eq:leading-Wminus-G}\\
\Gt_{i+1}\cW_{j+1}
&\equiv
\cW_{j+1}\Gt_{i+1}
+(1-q^2)\sum_{\ell=1}^{m}
        \cW_{i+j+2-\ell}\Gt_\ell
-(1-q^2)\sum_{\ell=0}^{m}
        \cW_{\ell+1}\Gt_{i+j+1-\ell}.
\label{eq:leading-tildeG-Wplus}
\end{align}
\end{lemma}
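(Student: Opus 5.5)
Lemma~\ref{lem:leading-reductions} is attributed to Terwilliger's reduction rules \cite[Proposition~5.1(v),(vi)]{Terwilliger2021ACE}, so the plan is to deduce it from those rules rather than recompute everything from the defining relations. These rules express each product of two alternating generators that is out of $<_{\rm T}$-order as an explicit linear combination of ordered monomials. In particular, they cover the products $\cW_{-i}\cG_{j+1}$ (here $\cG$ should precede $\cW^-$) and $\Gt_{i+1}\cW_{j+1}$ (here $\cW^+$ should precede $\Gt$). I would first check that, after translating to the present notation, these are exactly parts (v) and (vi). The translation includes the convention \eqref{eq:G0} for $\cG_0,\Gt_0$ and the ordering of indices within each family.

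The key steps, in order, are as follows.

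\begin{enumerate}
\item Record the exact formula from \cite{Terwilliger2021ACE}. Its right-hand side consists of three kinds of terms: quadratic terms $\cG_\ell\cW_{-(i+j+1-\ell)}$ and $\cG_{i+j+1-\ell}\cW_{-\ell}$ with coefficients $\pm(1-q^{-2})$, linear terms (single generators), and possibly products involving scalar factors $\cG_0$.
\item Sort every term by bidegree. Each quadratic term has degree $(2\ell)+(2(i+j+1-\ell)+1)=2i+2j+3$ and length $2$, which equals $\operatorname{bideg}(\cW_{-i}\cG_{j+1})$; these are the leading terms. Any term containing a scalar $\cG_0$ has length $1$ and lies in degree $\le 2i+2j+3$. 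The same holds for single generators. All such terms therefore lie in $\Aq^{\lhd(2i+2j+3,2)}$. For this step I use that each ordered length-$\le1$ monomial belongs to $\cB_{\rm T}$, so these terms are genuinely lower.
\item Check that the remaining degree-$(2i+2j+3)$, length-$2$ terms are $<_{\rm T}$-ordered, i.e.\ $\cG$ before $\cW^-$. This holds, so they are elements of $\cB_{\rm T}$ and the congruence is meaningful in the decomposition $\Aq=\bigoplus{\Aq}_{d,r}$.
\item Treat \eqref{eq:leading-tildeG-Wplus} in the same way. Either quote (vi) directly, or derive it from (v) via the antiautomorphism/automorphism of $\Aq$ exchanging $\cW_{-k}\leftrightarrow\cW_{k+1}$ and $\cG\leftrightarrow\Gt$ (reversing products). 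That symmetry sends $q\mapsto q^{-1}$ in the coefficient, turning $1-q^{-2}$ into $1-q^2$, and reverses the factor order, giving $\cW_{\cdot}\Gt_{\cdot}$.
\end{enumerate}

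The main obstacle is step~1: matching index ranges and boundary terms exactly. Specifically, I must confirm that the first sum runs over $\ell=1,\dots,m$ while the second runs over $\ell=0,\dots,m$. The term $\ell=0$ of the second sum is $\cG_{i+j+1}\cW_0$. A possible $\ell=0$ term of the first sum would be $\cG_0\cW_{-(i+j+1)}$, which is scalar times a generator and hence lower; it is correctly dropped. If Terwilliger's statement is given only as a generating-function identity, I would extract coefficients and perform this check term by term. The check also includes confirming that no term of the same bidegree is hidden among the terms I have discarded as lower.
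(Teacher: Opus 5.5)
Your main route matches the paper, which states the lemma purely as a citation of Terwilliger's Proposition~5.1(v),(vi) read modulo $\Aq^{\lhd(2i+2j+3,2)}$. Sorting every term of the exact rule by bidegree is all that is needed, and it also correctly discards the quadratic $\cG\cW^+$ terms of degree $2i+2j+1$ that the exact rule contains but your step~1 does not list, e.g.\ $(1-q^{-2})(\cG_2\cW_1-\cG_1\cW_2)$ when $i=j=1$.

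The optional shortcut in step~4 does not work as described. The automorphism of $\Aq$ swapping $\cW_{-k}\leftrightarrow\cW_{k+1}$ and $\cG_{k+1}\leftrightarrow\Gt_{k+1}$ preserves both the order of factors and the scalar $1-q^{-2}$. It therefore turns (v) into a rule expressing $\cW_{i+1}\Gt_{j+1}$ through $\Gt\cW^+$ products, the opposite direction to (vi). No antiautomorphism of $\Aq$ acts that way on the generators. A correct substitute is the anti-isomorphism $\mathcal{A}_{q^{-1}}\to\Aq$ given by $\cW_{-k}\mapsto\cW_{k+1}$, $\cW_{k+1}\mapsto\cW_{-k}$, $\cG_{k+1}\mapsto-\Gt_{k+1}$, $\Gt_{k+1}\mapsto-\cG_{k+1}$, applied to (v) at parameter $q^{-1}$. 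Quoting (vi) directly is simpler.
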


\paragraph{The crossing
\(\cG\cW^-\leftrightarrow\cW^-\cG\)}

The first crossing is obtained by putting $i=N-a$ and $j=a$ in
Lemma~\ref{lem:leading-reductions}.

\begin{proposition}\label{prop:first-crossing}
Let $N\in\N$. For $0\leq a\leq N$,
\begin{align}
\cW_{-(N-a)}\cG_{a+1}
\equiv{}&
\cG_{a+1}\cW_{-(N-a)}
+(1-q^{-2})\sum_{b=0}^{m_a-1}
        \cG_{b+1}\cW_{-(N-b)}
\notag\\
&-(1-q^{-2})\sum_{b=N-m_a}^{N}
        \cG_{b+1}\cW_{-(N-b)}
        \pmod{\Aq^{\lhd(2N+3,2)}}.
\label{eq:first-crossing}
\end{align}
Equivalently,
\[
\begin{pmatrix}
        \cW_{-N}\cG_1\\
        \cW_{-(N-1)}\cG_2\\
        \vdots\\
        \cW_0\cG_{N+1}
\end{pmatrix}
\equiv
M_N(1-q^{-2})
\begin{pmatrix}
        \cG_1\cW_{-N}\\
        \cG_2\cW_{-(N-1)}\\
        \vdots\\
        \cG_{N+1}\cW_0
\end{pmatrix}
\pmod{\Aq^{\lhd(2N+3,2)}}.
\]
\end{proposition}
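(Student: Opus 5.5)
This is a direct specialisation of Lemma~\ref{lem:leading-reductions}, equation \eqref{eq:leading-Wminus-G}, followed by a reindexing. Fix $N$ and $0\le a\le N$, and set $i=N-a$ and $j=a$. Then $2i+2j+3=2N+3$, so the modulus $\Aq^{\lhd(2i+2j+3,2)}$ in the lemma is exactly $\Aq^{\lhd(2N+3,2)}$. Also $m=\min(N-a,a)=m_a$ and $i+j+1=N+1$. The lemma therefore gives
\[
\cW_{-(N-a)}\cG_{a+1}\equiv \cG_{a+1}\cW_{-(N-a)}+(1-q^{-2})\sum_{\ell=1}^{m_a}\cG_\ell\cW_{-(N+1-\ell)}-(1-q^{-2})\sum_{\ell=0}^{m_a}\cG_{N+1-\ell}\cW_{-\ell}.
\]

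Next I reindex the two sums to match \eqref{eq:first-crossing}.

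In the first sum put $b=\ell-1$. Then $b$ runs over $0,\ldots,m_a-1$, and the term becomes $\cG_{b+1}\cW_{-(N-b)}$.

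In the second sum put $b=N-\ell$. Then $b$ runs over $N-m_a,\ldots,N$, $\cG_{N+1-\ell}=\cG_{b+1}$ and $\cW_{-\ell}=\cW_{-(N-b)}$.

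This gives \eqref{eq:first-crossing} exactly. The edge case $m_a=0$ (that is, $a\in\{0,N\}$) needs a separate check. There the first sum is empty and the second consists of the single term $\cG_{N+1}\cW_0$. When $a=N$, this term coincides with the leading term $\cG_{a+1}\cW_{-(N-a)}$, producing the diagonal entry $1-\eta$.

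For the matrix form, index the column vector of products $\cG_{b+1}\cW_{-(N-b)}$ by $b=0,\ldots,N$, and set $Z_b=\cG_{b+1}\cW_{-(N-b)}$. The right side of \eqref{eq:first-crossing} for row $a$ is then
\[
Z_a+\eta\sum_{b<m_a}Z_b-\eta\sum_{b\ge N-m_a}Z_b,\qquad \eta=1-q^{-2},
\]
which is precisely row $a$ of $M_N(\eta)$ as defined in \eqref{eq:MN-row}. The left-hand entry in row $a$ is $\cW_{-(N-a)}\cG_{a+1}$, matching the displayed vector. Overlaps, such as when $a$ lies in $[N-m_a,N]$, are handled automatically because the definition of $M_N$ is given by the same formal expression, so no coefficients need to be merged by hand.

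There is no real obstacle here. The only points needing care are the index bookkeeping (confirming that $i=N-a$, $j=a$ gives the stated modulus, and the substitutions $b=\ell-1$ and $b=N-\ell$) and checking that the boundary cases $m_a=0$ agree with the convention that the final sum is then the single term $Z_N$.
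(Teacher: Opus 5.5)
Your proposal is correct and matches the paper's proof: the paper also obtains this proposition by setting $i=N-a$, $j=a$ in Lemma~\ref{lem:leading-reductions}. Your checks that $m=m_a$ and that the modulus becomes $\Aq^{\lhd(2N+3,2)}$, together with the reindexings $b=\ell-1$ and $b=N-\ell$, are just the bookkeeping the paper leaves implicit.
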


\paragraph{The crossing
\(\cW^+\Gt\leftrightarrow\Gt\cW^+\)}

The second crossing is obtained by putting $i=a$ and $j=N-a$ in
Lemma~\ref{lem:leading-reductions}.

\begin{proposition}
\label{prop:second-crossing}
Let $N\in\N$. For $0\leq a\leq N$,
\begin{align}
\Gt_{a+1}\cW_{N-a+1}
\equiv{}&
\cW_{N-a+1}\Gt_{a+1}
+(1-q^2)\sum_{b=0}^{m_a-1}
        \cW_{N-b+1}\Gt_{b+1}
\notag\\
&-(1-q^2)\sum_{b=N-m_a}^{N}
        \cW_{N-b+1}\Gt_{b+1}
        \pmod{\Aq^{\lhd(2N+3,2)}}.
\label{eq:second-crossing}
\end{align}
Equivalently,
\[
\begin{pmatrix}
        \Gt_1\cW_{N+1}\\
        \Gt_2\cW_N\\
        \vdots\\
        \Gt_{N+1}\cW_1
\end{pmatrix}
\equiv
M_N(1-q^2)
\begin{pmatrix}
        \cW_{N+1}\Gt_1\\
        \cW_N\Gt_2\\
        \vdots\\
        \cW_1\Gt_{N+1}
\end{pmatrix}
\pmod{\Aq^{\lhd(2N+3,2)}}.
\]
\end{proposition}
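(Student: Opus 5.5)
The plan is to specialise \eqref{eq:leading-tildeG-Wplus} of Lemma~\ref{lem:leading-reductions} at $i=a$, $j=N-a$, reindex the two sums so that every product has the form $\cW_{N-b+1}\Gt_{b+1}$, and then read off the coefficients, which should be exactly those of row $a$ of $M_N(\eta)$ in \eqref{eq:MN-row} with $\eta=1-q^2$. This is the mirror image of the argument for Proposition~\ref{prop:first-crossing}.

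First I check the modulus. With $i=a$ and $j=N-a$ we have $2i+2j+3=2N+3$, so the congruence in Lemma~\ref{lem:leading-reductions} holds modulo $\Aq^{\lhd(2N+3,2)}$. Moreover $m=\min(a,N-a)=m_a$. The left side of \eqref{eq:leading-tildeG-Wplus} becomes $\Gt_{a+1}\cW_{N-a+1}$, and the leading term on the right becomes $\cW_{N-a+1}\Gt_{a+1}$.

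Next come the two reindexings. In the first sum, $\sum_{\ell=1}^{m_a}\cW_{N+2-\ell}\Gt_\ell$, I substitute $b=\ell-1$. This gives $\sum_{b=0}^{m_a-1}\cW_{N-b+1}\Gt_{b+1}$, with coefficient $+(1-q^2)$. In the second sum, $\sum_{\ell=0}^{m_a}\cW_{\ell+1}\Gt_{N+1-\ell}$, I substitute $b=N-\ell$. Then $\cW_{\ell+1}=\cW_{N-b+1}$ and $\Gt_{N+1-\ell}=\Gt_{b+1}$, and as $\ell$ runs over $0,\dots,m_a$, the index $b$ runs over $N-m_a,\dots,N$. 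This gives $-(1-q^2)\sum_{b=N-m_a}^{N}\cW_{N-b+1}\Gt_{b+1}$, which establishes \eqref{eq:second-crossing}.

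For the matrix form, I set $Z_b=\cW_{N-b+1}\Gt_{b+1}$. The right side of \eqref{eq:second-crossing} is then $Z_a+\eta\sum_{b<m_a}Z_b-\eta\sum_{b\ge N-m_a}Z_b$, which is exactly row $a$ of $M_N(\eta)$ by \eqref{eq:MN-row}. The left side is the $a$-th entry of the column vector $(\Gt_{b+1}\cW_{N-b+1})_b$, so stacking the rows for $a=0,\dots,N$ gives the displayed matrix identity.

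No step is a genuine obstacle; the only care needed is the index bookkeeping. One point to check is that when $a\ge N-m_a$ (the middle rows), the diagonal term also appears in the second sum. This is consistent with \eqref{eq:MN-row}, which counts it twice to produce the diagonal entry $1-\eta$. Another is the edge case $m_a=0$: the first sum is then empty and the second reduces to the single term $Z_N$, matching the convention stated after \eqref{eq:MN-row}.
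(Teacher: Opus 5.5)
Your proposal is correct and matches the paper's argument, which is simply to substitute $i=a$, $j=N-a$ into \eqref{eq:leading-tildeG-Wplus}; this gives $m=m_a$ and modulus $\Aq^{\lhd(2N+3,2)}$, and your reindexings $b=\ell-1$ and $b=N-\ell$ then identify row $a$ of $M_N(1-q^2)$. One small wording point: the diagonal term falls into the second sum for every row with $a\geq N/2$ (your condition $a\geq N-m_a$), not only the middle row, which agrees with the entries $1-\eta$ on the lower half of the diagonal of $M_4(\eta)$.
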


\begin{corollary}
\label{cor:invert-crossings}
The two crossings are invertible modulo lower bidegree. More precisely, for
each $N\in\N$ and $0\leq b\leq N$, there exist scalars
$c_{ba}^{(N)}$ and $d_{ba}^{(N)}$, indexed by $0\leq a\leq N$, such that
\begin{equation}
\label{eq:inverse-crossing-one}
        \cG_{b+1}\cW_{-(N-b)}
        \equiv
        \sum_{a=0}^{N}c_{ba}^{(N)}
        \cW_{-(N-a)}\cG_{a+1}
        \pmod{\Aq^{\lhd(2N+3,2)}},
\end{equation}
and
\begin{equation}
\label{eq:inverse-crossing-two}
        \cW_{N-b+1}\Gt_{b+1}
        \equiv
        \sum_{a=0}^{N}d_{ba}^{(N)}
        \Gt_{a+1}\cW_{N-a+1}
        \pmod{\Aq^{\lhd(2N+3,2)}}.
\end{equation}
\end{corollary}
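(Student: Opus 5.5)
The plan is to invert the matrix identities of Propositions~\ref{prop:first-crossing} and~\ref{prop:second-crossing}, using Lemma~\ref{lem:detMN} to see that the relevant matrices are invertible over $\F$. Fix $N\in\N$. Proposition~\ref{prop:first-crossing} gives
\[
X\equiv M_N(1-q^{-2})\,Y \pmod{\Aq^{\lhd(2N+3,2)}},
\]
where $X$ is the column vector with entries $\cW_{-(N-a)}\cG_{a+1}$ and $Y$ is the column vector with entries $\cG_{b+1}\cW_{-(N-b)}$, for $0\le a,b\le N$.

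\textbf{Invertibility.} By Lemma~\ref{lem:detMN} with $\eta=1-q^{-2}$,
\[
\det M_N(1-q^{-2})=(q^{-2})^{\lfloor N/2\rfloor+1}\neq 0,
\]
because $q\neq 0$. Likewise, with $\eta=1-q^2$ the determinant is $(q^2)^{\lfloor N/2\rfloor+1}\neq0$. So both matrices are invertible over $\F$. Note that the root-of-unity hypothesis is not even needed for this step.

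\textbf{Solving for the $\cG\cW^-$ products.} Let $C^{(N)}=M_N(1-q^{-2})^{-1}$, and set $c^{(N)}_{ba}=(C^{(N)})_{ba}$. The congruence holds modulo the subspace $\Aq^{\lhd(2N+3,2)}$, and congruence modulo a subspace is preserved under taking $\F$-linear combinations. Applying $C^{(N)}$ entrywise to both sides therefore gives $C^{(N)}X\equiv Y$ modulo the same subspace. Reading off the $b$-th entry yields \eqref{eq:inverse-crossing-one}.

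\textbf{Solving for the $\cW^+\Gt$ products.} The same argument applied to Proposition~\ref{prop:second-crossing} handles the second crossing. We take $d^{(N)}_{ba}$ to be the entries of $M_N(1-q^2)^{-1}$, which gives \eqref{eq:inverse-crossing-two}.

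There is no real obstacle here. The only point needing care is that the inverse is applied to a linear congruence modulo a fixed subspace, and this is legitimate because $\Aq^{\lhd(2N+3,2)}$ is a vector subspace.
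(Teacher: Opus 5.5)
Your proposal is correct and matches the paper's proof: it uses Lemma~\ref{lem:detMN} to get $\det M_N(1-q^{\mp 2})=q^{\mp 2(\lfloor N/2\rfloor+1)}\neq 0$, then applies the inverse matrices to the congruences of Propositions~\ref{prop:first-crossing} and~\ref{prop:second-crossing} and reads off the $b$-th entries. Your extra remarks are also correct and only make explicit what the paper leaves implicit: congruence modulo the subspace $\Aq^{\lhd(2N+3,2)}$ is preserved under $\F$-linear combinations, and this step needs only $q\neq 0$.
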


\begin{proof}
By Lemma~\ref{lem:detMN},
\[
        \det M_N(1-q^{-2})
        =
        q^{-2(\lfloor N/2\rfloor+1)}
        \neq 0,
\]
and
\[
        \det M_N(1-q^2)
        =
        q^{2(\lfloor N/2\rfloor+1)}
        \neq 0.
\]
Thus the two matrices appearing in
Propositions~\ref{prop:first-crossing}
and~\ref{prop:second-crossing} are invertible. Taking their inverse matrices
gives \eqref{eq:inverse-crossing-one} and
\eqref{eq:inverse-crossing-two}.
\end{proof}

\section{A new PBW basis for the alternating central extension}

We now pass from the two crossing computations to the PBW order conjectured
in \cite[Conjecture~4.5]{Terwilliger2018Action}.
Let $\cB_{\rm C}$ denote the set of all monomials
\begin{equation}\label{eq:conjectural-basis-monomial}
        \cW_{-i_1}\cdots \cW_{-i_r}\,
        \cG_{j_1+1}\cdots \cG_{j_s+1}\,
        \Gt_{k_1+1}\cdots \Gt_{k_t+1}\,
        \cW_{\ell_1+1}\cdots \cW_{\ell_u+1},
\end{equation}
where $r,s,t,u\in\N$ and
\[
        0\leq i_1\leq\cdots\leq i_r,\qquad
        0\leq j_1\leq\cdots\leq j_s,
\]
\[
        0\leq k_1\leq\cdots\leq k_t,\qquad
        0\leq \ell_1\leq\cdots\leq \ell_u.
\]
Empty blocks are allowed. Thus the block order is
\[
        \cW^-<\cG<\widetilde{\cG}<\cW^+.
\]
Set $V_{\rm C}=\Span(\cB_{\rm C})$.
For $d\in\N$, let $V_{\rm C}^{\leq d}$ denote the span of those monomials
in $\cB_{\rm C}$ whose degree is at most $d$.
We shall prove that $\cB_{\rm C}$ is a PBW basis for $\Aq$.

We first record two elementary facts about terms of lower bidegree.
By an $\Aq$-word we mean a word in the alternating generators.

\begin{lemma}
\label{lem:straightening-respects-bidegree}
Let $w$ be an $\Aq$-word of degree $d$ and length $r$.  When $w$ is expanded
in the PBW basis $\cB_{\rm T}$, only basis elements of degree less than $d$,
or of degree $d$ and length at most $r$, can occur.
\end{lemma}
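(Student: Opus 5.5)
We will derive the lemma from Terwilliger's straightening procedure behind Theorem~\ref{thm:terwilliger-pbw}, namely the reduction rules of \cite[Section~5]{Terwilliger2021ACE}. The idea is to run the straightening of $w$ into $\cB_{\rm T}$ as an explicit rewriting process, and to check that no rewriting step raises bidegree. Since $\cB_{\rm T}$ is a basis, the expansion of $w$ is unique, so it suffices to exhibit \emph{one} rewriting of $w$ into ordered monomials all of whose terms satisfy the bound. Along the way we call a linear combination of words \emph{admissible for $(d,r)$} if each word either has degree $<d$, or has degree $d$ and length $\le r$.

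First, we recall what each reduction rule looks like. For an adjacent out-of-order pair $xy$ of alternating generators with $y<_{\rm T}x$, the rules express $xy$ as $yx$ plus a linear combination of words. Each of these words either
\begin{itemize}
\item is a product of two alternating generators of total degree $\deg x+\deg y$, or
\item has strictly smaller degree, possibly with a scalar factor $\cG_0$ or $\Gt_0$ of degree zero.
\end{itemize}
This homogeneity is exactly the content of the degree function \cite[Definition~7.1]{Terwilliger2021ACE}. Terwilliger uses it to show that $\Aq$ is filtered by degree, with associated graded matching \eqref{eq:hilbert}. The form \eqref{eq:leading-Wminus-G}, \eqref{eq:leading-tildeG-Wplus} of Lemma~\ref{lem:leading-reductions} displays this pattern for the relevant pairs. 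We will verify the pattern family by family from the relations \eqref{eq:Aq1}--\eqref{eq:Aq11}:
\begin{itemize}
\item \eqref{eq:Aq1}--\eqref{eq:Aq3} trade a product of two generators for single generators of the same or lower degree;
\item \eqref{eq:Aq4}--\eqref{eq:Aq11} preserve total degree and length.
\end{itemize}
Thus a single rewrite of an adjacent pair inside a word $uxyv$ replaces it by words each of which is either of the same degree and the same length, or of smaller degree.

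Second, we argue by induction on $(d,\text{number of inversions})$. If $w$ is already ordered, it lies in $\cB_{\rm T}$ and there is nothing to prove. Otherwise, we apply one reduction to an out-of-order adjacent pair. The main term $uyxv$ has the same bidegree and fewer inversions, so it is handled by the inner induction. Each remaining term has either
\begin{itemize}
\item the same degree and the same length, with fewer inversions by the structure of Terwilliger's ordering argument, or
\item strictly smaller degree $e<d$.
\end{itemize}
In the first case the inner induction applies. In the second case the outer induction on degree applies: the term's expansion involves only degree $\le e<d$, and these are admissible for $(d,r)$ regardless of length. Admissibility is preserved under linear combination, which closes the induction.

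The main obstacle is the same-degree correction terms. We must ensure that they never lengthen a word, and that the straightening terminates, meaning that the inner measure genuinely decreases. For lengths this is clear, since rules replace two letters by at most two letters. Termination is precisely what Terwilliger establishes in proving Theorem~\ref{thm:terwilliger-pbw}. If the inversion count does not decrease for some same-degree terms, we would instead use his termination order verbatim, citing \cite[Section~5--6]{Terwilliger2021ACE}. Alternatively, we would induct on $d$ together with a well-founded order on words of fixed degree and length. Either choice suffices, because only finitely many words have a given degree.
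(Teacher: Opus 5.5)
\textbf{Gap: the case split omits same-degree, shorter terms.} Your overall plan is the paper's: straighten $w$ with Terwilliger's reduction rules, check that no step raises degree or length, and use uniqueness of the expansion in $\cB_{\rm T}$. The problem is the case analysis your induction rests on, which is false. The rules produce not only (a) length-two words of the same degree and (b) words of strictly smaller degree. They also produce words of the \emph{same degree and strictly smaller length}. The simplest instance is the out-of-order pair $\cW_0\cG_1$: by \eqref{eq:Aq2} with $k=0$,
\[
\cW_0\cG_1=q^{-2}\cG_1\cW_0+q^{-1}\rho\,\cW_{-1}-q^{-1}\rho\,\cW_1,
\]
and $\deg\cW_{-1}=3=\deg(\cW_0\cG_1)$. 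Likewise \eqref{eq:Aq1} gives $\cW_1\cW_0=\cW_0\cW_1-(\Gt_1-\cG_1)/[2]_q$, where every term has degree $2$. Your own bullet ``single generators of the same or lower degree'' already contradicts the dichotomy. Also, the cross-block rules do not come from the homogeneous relations \eqref{eq:Aq4}--\eqref{eq:Aq11}; for example, \eqref{eq:Aq6} with $k=\ell=0$ is vacuous. These same-degree shorter terms are exactly why the paper tracks bidegree and states Lemma~\ref{lem:leading-reductions} modulo $\Aq^{\lhd(2i+2j+3,2)}$ rather than modulo lower degree.

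\textbf{Why the induction fails, and the repair.} Induction on $(d,\text{inversions})$ does not reach these terms. The word $\cW_0\cG_1\cW_0^{\,n}$ has one inversion, but it produces $\cW_{-1}\cW_0^{\,n}$, which has the same degree $n+3$ and $n$ inversions. The inversion count fails even for same-length terms. By Lemma~\ref{lem:leading-reductions}, $\cG_5^{\,n}\cW_{-1}\cG_6$ (one inversion, and its only out-of-order pair) produces $(1-q^{-2})\cG_5^{\,n}\cG_1\cW_{-6}$, which has $n$ inversions. So your fallback is not optional. As phrased (``a well-founded order on words of fixed degree and length'') it still leaves the shorter same-degree words uncovered. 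There are two easy repairs.
\begin{enumerate}
\item Induct on the bidegree $(\deg,\operatorname{len})$ lexicographically, and use Terwilliger's terminating straightening order only within a fixed bidegree. A same-degree shorter term then falls under the hypothesis, and its expansion (degree $<d$, or degree $d$ and length $<r$) is admissible for $(d,r)$.
\item Argue as the paper does, with no bespoke measure. Straightening terminates by the argument behind Theorem~\ref{thm:terwilliger-pbw}. Each reduction replaces a length-two subword by words of length at most two, and by Terwilliger's Lemma~7.11 their degree does not exceed that of the subword. Hence every word occurring at every stage has degree $\le d$ and length $\le r$, and uniqueness of the expansion in $\cB_{\rm T}$ finishes the proof.
\end{enumerate}
In either version, take the degree bound from Terwilliger's explicit reduction rules and his Lemma~7.11. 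Do not try to verify it ``family by family'' from the defining relations.
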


\begin{proof}
The straightening procedure for the PBW basis $\cB_{\rm T}$ uses the
reduction rules in \cite[Proposition~5.1]{Terwilliger2021ACE}.  Each
reduction replaces a reducible word of length two by a linear combination of
words of length at most two, and so length does not increase.  Moreover,
degree does not increase under these reductions by
\cite[Lemma~7.11]{Terwilliger2021ACE}.  Repeatedly straightening $w$ therefore
gives only PBW basis elements of degree at most $d$, and among those of degree
$d$ only basis elements of length at most $r$ can occur.
\end{proof}

\begin{lemma}\label{lem:lower-stable}
Let $R\in \Aq^{\lhd(d,r)}$. 
Let $P,Q$ be $\Aq$-words of degrees $d_P,d_Q$ and lengths $r_P,r_Q$, respectively. 
Then 
\[PRQ\in \Aq^{\lhd(d+d_P+d_Q,\;r+r_P+r_Q)}.\]
\end{lemma}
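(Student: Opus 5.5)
By linearity, it suffices to treat the case where $R$ is a single PBW basis element $b\in\cB_{\rm T}$ whose bidegree $(e,s)$ is lexicographically lower than $(d,r)$. That is, either $e<d$, or $e=d$ and $s<r$. The spaces $\Aq^{\lhd(\cdot,\cdot)}$ are spans of PBW basis elements, so every $R\in\Aq^{\lhd(d,r)}$ is a finite linear combination of such $b$. Since the target space is a subspace, it is enough to show that $PbQ$ lies in it for each such $b$.

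For a fixed $b$, note that $b$ is itself an $\Aq$-word of degree $e$ and length $s$, being an ordered monomial in the alternating generators. Hence $PbQ$ is an $\Aq$-word of degree $e':=e+d_P+d_Q$ and length $s':=s+r_P+r_Q$. By Lemma~\ref{lem:straightening-respects-bidegree}, the expansion of $PbQ$ in $\cB_{\rm T}$ involves only basis elements of degree less than $e'$, or of degree $e'$ and length at most $s'$. In other words, it involves only basis elements of bidegree lexicographically at most $(e',s')$.

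It remains to compare $(e',s')$ with $D:=(d+d_P+d_Q,\;r+r_P+r_Q)$. There are two cases.
\begin{itemize}
\item If $e<d$, then $e'<d+d_P+d_Q$. Every basis element occurring has degree at most $e'$, hence degree strictly below that of $D$, so it lies in $\Aq^{\lhd D}$.
\item If $e=d$ and $s<r$, then $e'=d+d_P+d_Q$ and $s'<r+r_P+r_Q$. The occurring basis elements either have smaller degree, or have the same degree and length at most $s'$, which is strictly less than $r+r_P+r_Q$. In both situations they lie in $\Aq^{\lhd D}$ by the definition \eqref{eq:lower-space}.
\end{itemize}
Summing over $b$ gives $PRQ\in\Aq^{\lhd D}$.

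The only genuine input is Lemma~\ref{lem:straightening-respects-bidegree}, namely that straightening does not raise degree, and does not raise length when the degree is preserved. The rest is bookkeeping. The one point to check carefully is the transition from "length at most $s'$" to "length strictly less than $r+r_P+r_Q$" in the equal-degree case. This is exactly where the strict inequality $s<r$ is used, so that the bound becomes strict. Note also that $P$ or $Q$ may be empty words, of degree and length zero, and the argument covers these cases unchanged.
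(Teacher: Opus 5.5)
Your proof is correct and follows essentially the same route as the paper. Both arguments reduce by linearity to a single PBW basis element of lower bidegree and regard $PbQ$ as a word of the shifted bidegree. Both then apply Lemma~\ref{lem:straightening-respects-bidegree} separately in the cases $e<d$ and $e=d$, $s<r$.
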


\begin{proof}
By linearity, it is enough to consider the case where $R$ is a PBW basis element occurring in the definition of $\Aq^{\lhd(d,r)}$.

First suppose that $R$ has degree $e<d$.  Then every word occurring before
straightening in $PRQ$ has degree $e+d_P+d_Q<d+d_P+d_Q$.  By
Lemma~\ref{lem:straightening-respects-bidegree}, its PBW expansion contains
only terms of degree strictly less than $d+d_P+d_Q$.

Next suppose that $R$ has degree $d$ and length $s<r$.  Then every word
occurring before straightening in $PRQ$ has degree $d+d_P+d_Q$ and length
$s+r_P+r_Q<r+r_P+r_Q$.  By
Lemma~\ref{lem:straightening-respects-bidegree}, its PBW expansion contains
only terms of smaller degree, or of the same degree and length at most
$s+r_P+r_Q$.  These terms are again lower than
\[
        (d+d_P+d_Q,\;r+r_P+r_Q)
\]
in lexicographic order.
\end{proof}

\begin{proposition}
\label{prop:conjectural-spans}
For every $d\in\N$, one has
\[
        B_d\subseteq V_{\rm C}^{\leq d}.
\]
In particular, $\cB_{\rm C}$ spans $\Aq$.
\end{proposition}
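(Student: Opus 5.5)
We argue by induction on the bidegree $(d,r)$ in lexicographic order, degree compared first. The claim to prove is: for every $d,r\in\N$ and every $b\in\cB_{\rm T}$ with $\operatorname{bideg}(b)=(d,r)$, one has $b\in V_{\rm C}^{\leq d}$. Since $\Aq^{\lhd(d,r)}$ is spanned by basis elements of strictly lower bidegree, the induction hypothesis gives $\Aq^{\lhd(d,r)}\subseteq V_{\rm C}^{\leq d}$. Only finitely many lengths occur in each degree, so the induction is well founded. The base case $d=0$ is the identity, which lies in $\cB_{\rm C}$. The inclusion $B_d\subseteq V_{\rm C}^{\leq d}$ then follows by summing over $r$, and spanning follows because $\Aq=\bigcup_d B_d$.

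For the inductive step, write
\[
b=g\,w^-\,w^+\,\tilde g,\qquad g\in\cG,\ w^-\in\cW^-,\ w^+\in\cW^+,\ \tilde g\in\widetilde{\cG},
\]
each factor being an increasing monomial. We must move $g$ past $w^-$ and $\tilde g$ past $w^+$. We do this one adjacent transposition at a time, always working modulo $\Aq^{\lhd(d,r)}$.

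Consider an adjacent pair $\cG_{b'+1}\cW_{-(N-b')}$ occurring inside a word $P\,\cG_{b'+1}\cW_{-(N-b')}\,Q$ of bidegree $(d,r)$. By Corollary~\ref{cor:invert-crossings}, this pair is congruent to a linear combination of the $\cW_{-(N-a)}\cG_{a+1}$ modulo $\Aq^{\lhd(2N+3,2)}$. Lemma~\ref{lem:lower-stable} shows that the error, after multiplication by $P$ and $Q$, lies in $\Aq^{\lhd(d,r)}$, and this space is already in $V_{\rm C}^{\leq d}$ by induction. The pair $\cW_{N-b'+1}\Gt_{b'+1}$ is handled in the same way using \eqref{eq:inverse-crossing-two}.

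The new words produced have the same bidegree $(d,r)$, so we need a termination measure. We use the number of \emph{inversions}: pairs of letters, not necessarily adjacent, with a $\cG$-letter to the left of a $\cW^-$-letter, or a $\cW^+$-letter to the left of a $\Gt$-letter. Each crossing replaces a $\cG\cW^-$ adjacent pair by $\cW^-\cG$ pairs, so this count drops by exactly one, independent of indices. It is essential that $N$ and the block types are preserved, and that the letters $\cG$, $\cW^-$ are merely relabeled.

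We iterate until no inversions remain. The resulting words have the form (word in $\cW^-$)(word in $\cG$)(word in $\Gt$)(word in $\cW^+$), apart from interleavings that are not crossings. Interleavings of $\cW^-$ with $\cW^+$ and of $\cG$ with $\Gt$ cannot arise: the original relative order of these letters is already correct, and the crossings never swap those pairs.

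Finally, within each block we sort the letters freely using the commutativity relations \eqref{eq:Aq4} and \eqref{eq:Aq10}. This is an exact rewriting, so the result is a monomial in $\cB_{\rm C}$ of degree $d$.

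The main obstacle is the bookkeeping in the crossing step. The congruence in Corollary~\ref{cor:invert-crossings} is stated for isolated length-two products. Embedding it inside a longer word must keep the error within $\Aq^{\lhd(d,r)}$, which is exactly what Lemma~\ref{lem:lower-stable} provides when applied with $R$ being the difference of the two sides. We also need the lower terms to lie in $V_{\rm C}^{\leq d}$ rather than merely in $B_d$. This holds because $\Aq^{\lhd(d,r)}$ is covered by the induction hypothesis, and that hypothesis includes all of $B_{d-1}$ together with the degree-$d$ pieces of smaller length.
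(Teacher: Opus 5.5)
Your proposal is correct and follows essentially the same argument as the paper: strong induction on bidegree, the inverted crossings of Corollary~\ref{cor:invert-crossings} with error terms absorbed into $\Aq^{\lhd(d,r)}\subseteq V_{\rm C}^{\leq d}$ via Lemma~\ref{lem:lower-stable}, the inversion count as termination measure, and commutativity within each family to sort the blocks. The differences are cosmetic: you interleave the $\cG\cW^-$ and $\cW^+\Gt$ crossings instead of doing them in two successive passes, and you spell out why no other block interleavings can arise, which the paper leaves implicit.
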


\begin{proof}
It is enough to prove that every element of the known PBW basis
$\cB_{\rm T}$ lies in the appropriate filtered piece of $V_{\rm C}$.  We prove
the following stronger statement by induction on bidegree $(d,r)$ in
lexicographic order:
\[
        \text{if } w\in\cB_{\rm T} \text{ has degree } d
        \text{ and length } r,\text{ then } w\in V_{\rm C}^{\leq d}.
\]

Assume that the statement has been proved for all lower bidegrees, and let
$w\in\cB_{\rm T}$ have degree $d$ and length $r$.  With respect to the order
$<_{\rm T}$, the word $w$ has block form
\[
        \cG\text{-block}\;\cW^-\text{-block}\;
        \cW^+\text{-block}\;\widetilde{\cG}\text{-block}.
\]

We first move the $\cG$-letters to the right of the
$\cW^-$-letters.
Suppose that an adjacent subword is of the form
$\cG_{b+1}\cW_{-(N-b)}$.
By Corollary~\ref{cor:invert-crossings},
\[
        \cG_{b+1}\cW_{-(N-b)}
        =
        \sum_{a=0}^{N}c_{ba}^{(N)}
        \cW_{-(N-a)}\cG_{a+1}+R
\]
for some $R\in\Aq^{\lhd(2N+3,2)}$.  Multiplying this identity on the left and
right by the surrounding subwords, Lemma~\ref{lem:lower-stable} shows that
the contribution of $R$ lies in $\Aq^{\lhd(d,r)}$.  By the induction
hypothesis, this lower contribution belongs to $V_{\rm C}^{\leq d}$.

Thus, modulo $V_{\rm C}^{\leq d}$, we may replace each adjacent subword
of type $\cG\cW^-$ by a linear combination of subwords of type
$\cW^-\cG$.
Each such replacement decreases the number of inversions consisting of a
$\cG$-letter to the left of a $\cW^-$-letter.
Repeating finitely many times, and using the commutativity within each
family, we obtain
\[
        w\in V_{\rm C}^{\leq d}
        +
        \Span\{\,\cW^-\text{-block}\;\cG\text{-block}\;
        \cW^+\text{-block}\;\widetilde{\cG}\text{-block}\,\}.
\]

It remains to move the $\widetilde{\cG}$-letters to the left of the
$\cW^+$-letters.
Suppose that an adjacent subword is of the form
$\cW_{N-b+1}\Gt_{b+1}$.
Again by Corollary~\ref{cor:invert-crossings},
\[
        \cW_{N-b+1}\Gt_{b+1}
        =
        \sum_{a=0}^{N}d_{ba}^{(N)}
        \Gt_{a+1}\cW_{N-a+1}+R'
\]
for some $R'\in\Aq^{\lhd(2N+3,2)}$.
After multiplying by the surrounding subwords,
Lemma~\ref{lem:lower-stable} shows that the contribution of $R'$
lies in $\Aq^{\lhd(d,r)}$, and hence belongs to
$V_{\rm C}^{\leq d}$ by the induction hypothesis.

Therefore, modulo $V_{\rm C}^{\leq d}$, we may replace each adjacent
subword of type $\cW^+\widetilde{\cG}$ by a linear combination of
subwords of type $\widetilde{\cG}\cW^+$.
This decreases the number of inversions consisting of a
$\cW^+$-letter to the left of a $\widetilde{\cG}$-letter.
Repeating finitely many times, and using commutativity within each of the
four families, we obtain a linear combination of monomials in the block
order
\[
        \cW^-\text{-block}\;\cG\text{-block}\;
        \widetilde{\cG}\text{-block}\;\cW^+\text{-block}.
\]
These monomials belong to $\cB_{\rm C}$ and have degree $d$.  Hence
$w\in V_{\rm C}^{\leq d}$, completing the induction.

Since the elements of $\cB_{\rm T}$ of degree at most $d$ form a basis for
$B_d$, the inclusion $B_d\subseteq V_{\rm C}^{\leq d}$ follows.  Taking the
union over all $d$ shows that $\cB_{\rm C}$ spans $\Aq$.
\end{proof}

We now prove linear independence by comparing filtered dimensions.

\begin{proposition}
\label{prop:conjectural-basis}
The alternating generators of $\Aq$, ordered by increasing index within
each family and by the block order
\[
        \cW^-<\cG<\widetilde{\cG}<\cW^+,
\]
give a PBW basis for $\Aq$.
Equivalently, the set $\cB_{\rm C}$ is a basis for $\Aq$.
\end{proposition}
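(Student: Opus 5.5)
Spanning is already settled by Proposition~\ref{prop:conjectural-spans}, so only linear independence of $\cB_{\rm C}$ remains. I would obtain it from a filtered dimension count, comparing the number of monomials in $\cB_{\rm C}$ of degree at most $d$ with $\dim B_d$ from \eqref{eq:hilbert}. Since both sides are finite and $B_d\subseteq V_{\rm C}^{\leq d}$, equality of the counts forces the spanning set to be independent.

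The first step is to count the monomials in $\cB_{\rm C}$. A monomial \eqref{eq:conjectural-basis-monomial} is determined by four multisets. The multiset $\{i_1,\ldots,i_r\}$ uses letters of degrees $1,3,5,\ldots$, and so does $\{\ell_1,\ldots,\ell_u\}$. The multisets $\{j_1,\ldots,j_s\}$ and $\{k_1,\ldots,k_t\}$ use letters of degrees $2,4,6,\ldots$. So the generating function for the number of monomials of degree exactly $d$ is
\[
        \prod_{k\geq 0}\frac{1}{(1-x^{2k+1})^2}\prod_{k\geq 0}\frac{1}{(1-x^{2k+2})^2}
        =\prod_{n\geq 1}\frac{1}{(1-x^n)^2}.
\]
This equals \eqref{eq:hilbert}. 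The same count applies to $\cB_{\rm T}$, which has the same multiset data and differs only in block order. Hence, for each $d$, the number $c_d$ of elements of $\cB_{\rm C}$ of degree at most $d$ equals $\dim B_d$. Both are finite, because there are finitely many generators of each degree.

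The second step is the comparison. Each element of $\cB_{\rm C}$ of degree $e\le d$ is a word of degree $e$. By Lemma~\ref{lem:straightening-respects-bidegree}, its expansion in $\cB_{\rm T}$ has degree at most $e$, so it lies in $B_d$. Thus $V_{\rm C}^{\leq d}\subseteq B_d$, and together with Proposition~\ref{prop:conjectural-spans} this gives $V_{\rm C}^{\leq d}=B_d$. The space $B_d$ therefore has dimension $c_d$ and is spanned by $c_d$ vectors, namely the elements of $\cB_{\rm C}$ of degree $\le d$. Those vectors must be linearly independent.

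The final step is to pass to all of $\cB_{\rm C}$. Any finite linear dependence among elements of $\cB_{\rm C}$ involves only elements of degree at most some $d$, which contradicts the independence just shown. So $\cB_{\rm C}$ is a basis of $\Aq$. That is exactly the PBW statement for the order with increasing index within each family and block order $\cW^-<\cG<\Gt<\cW^+$.

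The only delicate point is making sure the filtration used in the count matches the one used for spanning. In particular, the degree of a $\cB_{\rm C}$ monomial as a word must be the same degree that controls where it lands in the filtration $B_d$. This is exactly what Lemma~\ref{lem:straightening-respects-bidegree} supplies, so I do not expect a genuine obstacle; the argument is a bookkeeping check that degree is the sum of letter degrees in both bases.
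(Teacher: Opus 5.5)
Your proposal is correct and matches the paper's argument: spanning comes from Proposition~\ref{prop:conjectural-spans}, and independence comes from counting the degree-$\le d$ monomials of $\cB_{\rm C}$ via $\prod_{n\ge1}(1-x^n)^{-2}$, which gives exactly $\dim B_d$, together with the fact that these monomials lie in $B_d$. The paper packages this as a surjection $\mu_d$ from the formal span $P_d$ onto $B_d$, two spaces of equal finite dimension; this is the same as your ``$c_d$ vectors spanning a $c_d$-dimensional space'' step, and read as a family indexed by formal monomials, either version also rules out coincidences among the monomials.
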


\begin{proof}
The spanning assertion follows from Proposition~\ref{prop:conjectural-spans}.
It remains to prove linear independence.

Fix $d\in\N$.  Let $P_d$ be the vector space with basis indexed by the formal
monomials of the form \eqref{eq:conjectural-basis-monomial} of degree at most
$d$.  Multiplication in $\Aq$ gives a linear map
\[
        \mu_d:P_d\longrightarrow B_d
\]
which sends each formal monomial to the corresponding product in $\Aq$.  This
map is well-defined because an $\Aq$-word of degree at most $d$ lies in $B_d$.
By Proposition~\ref{prop:conjectural-spans}, the map $\mu_d$ is surjective.

It remains to compare dimensions.  The degree generating function for the
formal monomials in the conjectural block order is
\[
        \prod_{k\geq 0}
        \frac{1}{(1-x^{2k+1})^2}
        \frac{1}{(1-x^{2k+2})^2}
        =
        \prod_{n=1}^{\infty}\frac{1}{(1-x^n)^2}.
\]
By \eqref{eq:hilbert}, this is the Hilbert series of $\Aq$ with respect to
the degree filtration.  Therefore, after summing coefficients up to degree
$d$, one obtains
\[
        \dim P_d=\dim B_d.
\]
Since $\mu_d$ is a surjective linear map between vector spaces of the same
finite dimension, it is an isomorphism.  Thus the images in $\Aq$ of the
formal monomial basis of $P_d$ are linearly independent.

As $d$ was arbitrary, all monomials in $\cB_{\rm C}$ are linearly independent.
Together with spanning, this proves that $\cB_{\rm C}$ is a basis for $\Aq$.
\end{proof}

\begin{corollary}
\label{cor:four-orders}
Keeping the increasing-index order within each family, the alternating
generators give PBW bases for each of the following four block orders:
\[
        \cG<\cW^-<\cW^+<\widetilde{\cG},
        \qquad
        \cW^-<\cG<\cW^+<\widetilde{\cG},
\]
\[
        \cG<\cW^-<\widetilde{\cG}<\cW^+,
        \qquad
        \cW^-<\cG<\widetilde{\cG}<\cW^+.
\]
\end{corollary}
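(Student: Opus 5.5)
The first order $\cG<\cW^-<\cW^+<\widetilde{\cG}$ is Terwilliger's order from Theorem~\ref{thm:terwilliger-pbw}, and the last order $\cW^-<\cG<\widetilde{\cG}<\cW^+$ is Proposition~\ref{prop:conjectural-basis}. So only the two mixed orders need proof. Each of them differs from $<_{\rm T}$ by exactly one of the two adjacent block crossings. The order $\cW^-<\cG<\cW^+<\widetilde{\cG}$ uses only the crossing $\cG\cW^-\leftrightarrow\cW^-\cG$. The order $\cG<\cW^-<\widetilde{\cG}<\cW^+$ uses only the crossing $\cW^+\Gt\leftrightarrow\Gt\cW^+$. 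I will rerun the proofs of Propositions~\ref{prop:conjectural-spans} and~\ref{prop:conjectural-basis} with the corresponding single crossing.

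For spanning, fix one of the mixed orders and let $V$ be the span of its ordered monomials, with $V^{\leq d}$ the span of those of degree at most $d$. I will prove, by induction on bidegree $(d,r)$ in lexicographic order, that every $w\in\cB_{\rm T}$ of degree $d$ and length $r$ lies in $V^{\leq d}$. In the first mixed case, $w$ has block form $\cG\,\cW^-\,\cW^+\,\widetilde{\cG}$. I will replace each adjacent $\cG_{b+1}\cW_{-(N-b)}$ using \eqref{eq:inverse-crossing-one} from Corollary~\ref{cor:invert-crossings}. By Lemma~\ref{lem:lower-stable}, the error term, once multiplied by the surrounding subwords, lies in $\Aq^{\lhd(d,r)}$, and the induction hypothesis places it in $V^{\leq d}$. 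Each replacement reduces the number of inversions of a $\cG$-letter before a $\cW^-$-letter. Commutativity within each family, from \eqref{eq:Aq4} and \eqref{eq:Aq10}, then reorders each block by increasing index. The result lies in $V^{\leq d}$. The second mixed case is identical, using \eqref{eq:inverse-crossing-two} to push $\widetilde{\cG}$-letters left past $\cW^+$-letters, starting from the same block form.

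For independence I will copy the dimension argument of Proposition~\ref{prop:conjectural-basis}. The formal ordered monomials in either mixed order have the same degree generating function $\prod_{n\geq1}(1-x^n)^{-2}$, because this count ignores the block order. By \eqref{eq:hilbert}, the surjection $\mu_d$ onto $B_d$ is therefore a map between spaces of equal finite dimension, hence bijective.

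I expect no step to be a real obstacle. The one point to check is that the spanning induction is legitimate when only one crossing is performed. Specifically, it must never require moving letters of the blocks that stay in Terwilliger's relative order. This holds because the starting monomial is already in $<_{\rm T}$ block form, and the crossing identities produce only words of the targeted two-letter types together with lower-bidegree terms, which the induction handles.
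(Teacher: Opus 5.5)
Your proposal is correct and is essentially the paper's own proof. The first and fourth orders come from Theorem~\ref{thm:terwilliger-pbw} and Proposition~\ref{prop:conjectural-basis}, and each mixed order comes from rerunning the spanning induction of Proposition~\ref{prop:conjectural-spans} with only one of the two crossings of Corollary~\ref{cor:invert-crossings}, followed by the same block-order-independent dimension count; your check that a single crossing never disturbs the other blocks is exactly what the paper's brief ``same argument'' appeal relies on.
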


\begin{proof}
The first block order is given by
Theorem~\ref{thm:terwilliger-pbw}.
The fourth block order is given by
Proposition~\ref{prop:conjectural-basis}.
The second is obtained by the same argument as
Proposition~\ref{prop:conjectural-basis}, using only the invertibility of
the crossing $\cG\cW^-\leftrightarrow\cW^-\cG$.
The third is obtained similarly, using only the invertibility of the crossing
$\cW^+\widetilde{\cG}\leftrightarrow\widetilde{\cG}\cW^+$.
\end{proof}

\section{Proof of Terwilliger's conjecture}

We conclude by translating the PBW basis obtained above into the tensor
factorisation in Terwilliger's conjecture
\cite[Conjecture~4.5]{Terwilliger2018Action}.

\begin{proof}[Proof of Theorem~\ref{thm:intro-main}]
Parts~{\rm (i)--(iv)} follow from Proposition~\ref{prop:poly}.

For part~{\rm (v)}, Proposition~\ref{prop:conjectural-basis} states that the
monomials
\[
        \cW_{-i_1}\cdots \cW_{-i_r}\,
        \cG_{j_1+1}\cdots \cG_{j_s+1}\,
        \Gt_{k_1+1}\cdots \Gt_{k_t+1}\,
        \cW_{\ell_1+1}\cdots \cW_{\ell_u+1},
\]
with non-decreasing indices in each of the four blocks, form a basis for
$\Aq$.
By Proposition~\ref{prop:poly}, the standard monomial bases of
$\cW^-$, $\cG$, $\widetilde{\cG}$, and $\cW^+$ are given by the
corresponding single-family ordered monomials.
Hence the tensor-product monomial basis of
\[
        \cW^-\otimes\cG\otimes
        \widetilde{\cG}\otimes\cW^+
\]
is mapped by multiplication bijectively onto the basis
$\cB_{\rm C}$ of $\Aq$.

Therefore, the multiplication map
\[
        \cW^-\otimes\cG\otimes
        \widetilde{\cG}\otimes\cW^+
        \longrightarrow\Aq,
        \qquad
        u\otimes v\otimes w\otimes x\longmapsto uvwx
\]
is an isomorphism of vector spaces.
\end{proof}

	\bigskip
	\centerline{\bf Acknowledgements}
The author thanks the referees for their careful reading and helpful suggestions, which have improved the motivation, organisation, and presentation of the paper. H.Z. gratefully acknowledges support from the NTU Research Scholarship.

\end{document}